\documentclass[11pt,onecolumn]{IEEEtran}
\usepackage{amssymb, amsmath, graphicx, cite, epsfig, booktabs}

\usepackage[usenames,dvipsnames]{pstricks}
\usepackage{epsfig}
\usepackage{pst-grad} 
\usepackage{pst-plot} 
\usepackage{bm}
\usepackage{turnstile}
\usepackage{ushort}
\usepackage{extarrows}


\addtolength{\oddsidemargin}{0.25in}
\setlength{\textwidth}{6.7in}

\newcommand{\thmend}{\hspace*{\fill}~\QEDopen\par\endtrivlist\unskip}

\newcommand{\bX}[1]{{{\mathit{\mathbf{X}}}_{#1}}}
\newcommand{\bY}{{{\mathit{\mathbf{Y}}}}}
\newcommand{\bU}{{\mathit{\mathbf{U}}}}
\newcommand{\bV}{{\mathit{\mathbf{V}}}}
\newcommand{\bZ}{{\mathit{\mathbf{Z}}}}
\newcommand{\bv}{{\mathit{\mathbf{v}}}}
\newcommand{\bbx}{{\mathit{\mathbf{x}}}}
\newcommand{\bu}{{\mathit{\mathbf{u}}}}
\newcommand{\by}{{\mathit{\mathbf{y}}}}
\newcommand{\bx}[1]{{{\mathit{\mathbf{x}}}_{#1}}}

\newcommand{\Bt}{{{\bm{\theta}}}}

\newtheorem{theorem}{\bf Theorem}
\newtheorem{lemma}{\bf Lemma}

\newtheorem{definition}{\bf Definition}
\newtheorem{corollary}{\bf Corollary}

\newtheorem{remark}{\bf Remark}

\newcommand{\qed}{\nobreak \ifvmode \relax \else
      \ifdim\lastskip<1.5em \hskip-\lastskip
      \hskip1.5em plus0em minus0.5em \fi \nobreak
      \vrule height0.35em width0.4em depth0.15em\fi}





\allowdisplaybreaks[4]


\begin{document}

\title{\vspace{-0.5cm}Separation Theorems for Phase-Incoherent Multiple-User Channels}
\author{{H.~Ebrahimzadeh Saffar, E.~Haj~Mirza~Alian, and P.~Mitran \\
Department of Electrical and Computer Engineering \\
University of Waterloo, Waterloo, Ontario, Canada\\
Email: \tt\{hamid, ealian, pmitran\}@ece.uwaterloo.ca} \vspace{-0.59cm}}



\maketitle

\begin{abstract}

We study the transmission of two correlated and memoryless sources $(U,V)$ over several
multiple-user phase asynchronous channels. Namely, we consider a class of phase-incoherent multiple
access relay channels (MARC) with both non-causal and causal unidirectional cooperation between
encoders, referred to as phase-incoherent unidirectional non-causal cooperative MARC
(PI-UNCC-MARC), and phase-incoherent unidirectional causal cooperative MARC (PI-UCC-MARC)
respectively. We also consider phase-incoherent interference channels (PI-IC), and interference
relay channel (PI-IRC) models in the same context. In all cases, the input signals are assumed to
undergo non-ergodic phase shifts due to the channel. The shifts are assumed to be unknown to the
transmitters and known to the receivers as a realistic assumption. Both necessary and sufficient
conditions in order to reliably send the correlated sources to the destinations over the considered
channels are derived. In particular, for all of the channel models, we first derive an outer bound
for reliable communication that is defined with respect to the source entropy content (i.e., the
triple $\left(H(U|V),H(V|U),H(U,V)\right)$). Then, using {\em separate} source and channel coding,
under specific gain conditions, we establish the same region as the inner bound and therefore
obtain tight conditions for reliable communication for the specific channel under study. We thus
establish a source-channel separation theorem for each channel and conclude that without the
knowledge of the phase shifts at the transmitter sides, separation is optimal. It is further
conjectured that separation in general is optimal for all channel coefficients.

\end{abstract}

\begin{keywords}
Multiple access relay channel, cooperative encoders, interference channel, interference relay
channel, phase uncertainty, joint source-channel coding, correlated sources.
\end{keywords}

\section{Introduction} 

Incoherence or asynchronism between different nodes of a communication network is an inherent
challenge to modern communication systems. In particular, there are major factors in wireless
systems, such as feedback delay, the bursty nature of some applications, and reaction delay, which
cause time or phase asynchronism between different nodes of a network
\cite{Wornell_asynchronism:2009}. Furthermore, in multi-user systems, interference from other
sources make synchronization much more difficult. Therefore, it is interesting to study multi-user
communication problems
without assuming synchronism a priori. 

In point-to-point wireless systems, achieving receiver synchronization is possible in principle,
using training sequences and/or feedback. However, although analytically convenient, full
synchronization is rarely a practical or easily justified assumption, and in some cases
theoretically infeasible \cite{Zhang_ICC:2007}.
The first studies of {\em time} asynchronism in point-to-point communications goes back to the 60's
(\cite{Chase:1968}, \cite{Dobrushin:1967}), where the receiver is not accurately aware of the exact
time that the encoded symbols were transmitted. The recent work of
\cite{Wornell_asynchronism:2009}, on the other hand, assumes a stronger form of time asynchronism,
that is, the receiver knows neither the time at which transmission starts, nor the timing of the
last information symbol. They propose a combined communication and synchronization scheme and
discuss information-theoretical limits of the model. Also, in multi-user communication settings,
the problem of time asynchronism is addressed for example in \cite{Cover_asynchronism:1981},
\cite{Verdu:1989} for the particular case of multiple access channels.

Besides time asynchronism \cite{Wornell_asynchronism:2009}, which is present in most channels,
other forms of asynchronism such as {\em phase} uncertainty are important in wireless systems. In
fading channels, the channel state information (CSI) models amplitude attenuation and phase shifts
(phase fading) introduced by the channels between the nodes. In many systems, it is difficult to
know phase shifts at the transmitter side due to the delay and resource limits in feedback
transmission. In particular, in highly mobile environments, fading in conjunction with feedback
delay may result in out of date phase knowledge by the time it reaches the transmitters (see, e.g.,
\cite{Onggosanusi_feedback_delay:2001}).

The issue of phase asynchronism can be analytically seen in the larger framework of {\em channel
uncertainty}, that is, the communicating parties have to work under situations where the full
knowledge of the law governing the channel (or channels in a multi-user setting) is not known to
some or all of them \cite{Lapidoth_Survey:1998}. In order to study this general problem from an
information-theoretic point of view, the mathematical model of a {\em compound channel} (or
state-dependent channel) has been introduced by different authors \cite{Blackwell:1959},
\cite{Csiszar:1981}, \cite{Wolfowitz:1978}. A compound channel is generally represented by a family
of transition probabilities $p^{\theta}_{Y \vert X}$, where the index $\theta \in {\Theta}$ is the
state of the channel and $\Theta$ represents the uncertainty of different parties about the exact
channel's transition probability.

In this paper, we consider the problem of joint source-channel coding for a range of compound
Gaussian multiple-user channels with phase
uncertainty and prove a separation theorem for each. 
We assume that the phase shifts over channels under consideration are stationary non-ergodic phase
fading processes which are chosen randomly and fixed over the block length. Thus, phase
asynchronism is formulated in the compound channel
framework and 
the phase information $\bm\theta$ (as the channel parameter) is assumed to be unknown to the
transmitters and known to the receiver side(s) as a practical assumption. 
Consequently, as our main contribution, we find conditions that are both necessary and sufficient
for sending a pair of correlated sources over a class of continuous alphabet multiple-user channels
under phase uncertainty.

The problem of joint source-channel coding for a network is open in general. Several works,
however, have been published on this issue for multiple access channel (MAC). As an example, for
lossy source-channel coding, a separation approach is shown in \cite{Diggavi_Shamai:2010} to be
optimal or approximately optimal to communicate {\em independent} sources. In
\cite{Cover_ElGamal_Salehi:1980}, on the other hand, a sufficient condition based on joint
source-channel coding to send correlated sources over a MAC is given, along with an uncomputable
expression for the outer bound. As the sufficient condition in \cite{Cover_ElGamal_Salehi:1980}
provides a region greater than that ensured to be achieved by separate source and channel coding,
it is proved that the separate source-channel coding is {\em not} optimal for {\em correlated}
sources. In \cite{Abdallah:2008}, \cite{FadiAbdallah_Caire:2008}, however, the authors show that
performing {\em separate} source and channel coding for the important case of a Gaussian MAC with
phase shifts, shown in Fig. \ref{PI_MAC} is optimal. Namely, in \cite{Abdallah:2008} and
\cite{FadiAbdallah_Caire:2008}, F. Abi Abdallah et. al. showed the following separation theorem for
a class of phase asynchronous multiple access channels for both non-ergodic, and ergodic i.i.d.
phase fading:

\begin{theorem}\label{Theorem_Capacity_MAC}
{\em Reliable communication over a PI-MAC}: A necessary condition for reliable communication of the
source pair $(\bU,\bV) \sim {\prod_{i}}p(u_{i},v_{i})$ over a class of multiple access channels
with unknown phase fading at the transmitters, with power constraints $P_1,P_2$ on the
transmitters, and fading amplitudes $g_1,g_2>0$, is given by
\begin{align}
H(U \vert V) & \leq \log(1+g_1^2P_1/N), \label{separation__MAC_1}\\
H(V \vert U) & \leq \log(1+g_2^2P_2/N), \\
H(U,V) & \leq \log(1+(g_1^2P_1+g_2^2P_2)/N), \label{separation_MAC_3}
\end{align}
\noindent where $N$ is the noise power. Sufficient conditions for the reliable communications are
also given by \eqref{separation__MAC_1}-\eqref{separation_MAC_3}, with $\leq$ replaced by $<$.
\thmend
\end{theorem}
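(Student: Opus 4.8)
\emph{Proof sketch.} The plan is to handle the two directions separately. For the \emph{converse} (the inequalities with ``$\leq$'') I would use a Fano/data-processing argument for joint source--channel coding over a MAC, with one crucial twist; for the \emph{achievability} (the inequalities with ``$<$'') I would concatenate a Slepian--Wolf source code with a capacity-achieving code for the compound phase-fading Gaussian MAC, after a short geometric observation places the needed rate region inside the channel's capacity region. I take one source symbol per channel use; a general bandwidth ratio merely rescales the bounds.

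\emph{Converse.} Fix a block length $n$ and a code with encoders $X_1^n=f_1(U^n)$, $X_2^n=f_2(V^n)$ and $\sup_{\bm\theta}P_e(\bm\theta)=\epsilon_n\to 0$. Treating $\bm\theta=(\theta_1,\theta_2)$ as uniform and independent of the sources, $\mathbb{E}_{\bm\theta}[P_e(\bm\theta)]\le\epsilon_n$, so Fano gives $H(U^n,V^n\mid Y^n,\bm\theta)\le n\delta_n$ with $\delta_n\to0$. Using $H(U^n,V^n)=nH(U,V)$, the Markov chain $(U^n,V^n)\to(X_1^n,X_2^n)\to Y^n$ given $\bm\theta$, and $h(Y^n\mid X_1^n,X_2^n,\bm\theta)=h(Z^n)=n\log(\pi eN)$,
\[
nH(U,V)\le I(X_1^n,X_2^n;Y^n\mid\bm\theta)+n\delta_n\le\sum_{i=1}^n h(Y_i\mid\bm\theta)-n\log(\pi eN)+n\delta_n.
\]
Conditional maximum entropy and Jensen's inequality then give $h(Y_i\mid\bm\theta)\le\mathbb{E}_{\bm\theta}\log(\pi e\,\mathbb{E}[|Y_i|^2\mid\bm\theta])\le\log(\pi e\,\mathbb{E}|Y_i|^2)$, and --- the twist --- since $\theta_1-\theta_2$ is uniform and independent of $(X_{1i},X_{2i})$, the cross term in $\mathbb{E}|Y_i|^2=g_1^2\mathbb{E}|X_{1i}|^2+g_2^2\mathbb{E}|X_{2i}|^2+N+2g_1g_2\mathrm{Re}\big(\mathbb{E}[e^{i(\theta_1-\theta_2)}]\,\mathbb{E}[X_{1i}\overline{X_{2i}}]\big)$ vanishes. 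Averaging over $i$, invoking the power constraints, and using concavity of $\log$ once more yields $H(U,V)\le\log(1+(g_1^2P_1+g_2^2P_2)/N)+\delta_n$; let $n\to\infty$. The two single-user bounds are easier and hold for every fixed $\bm\theta$: conditioning on $V^n$ (hence on $X_2^n$), subtracting $g_2e^{i\theta_2}X_{2i}$ inside $h(\cdot)$ to leave $g_1e^{i\theta_1}X_{1i}+Z_i$, and applying maximum entropy, Jensen and the power constraint $P_1$ gives $nH(U\mid V)\le n\log(1+g_1^2P_1/N)+n\delta_n$, and symmetrically for $H(V\mid U)$.

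\emph{Achievability.} Write $C_1=\log(1+g_1^2P_1/N)$, $C_2=\log(1+g_2^2P_2/N)$, $C_{12}=\log(1+(g_1^2P_1+g_2^2P_2)/N)$ and note $C_1+C_2=\log((1+g_1^2P_1/N)(1+g_2^2P_2/N))>C_{12}$. Under the three strict inequalities I claim there is a rate pair $(R_1,R_2)$ lying simultaneously in the Slepian--Wolf region of $(U,V)$ (that is, $R_1\ge H(U\mid V)$, $R_2\ge H(V\mid U)$, $R_1+R_2\ge H(U,V)$) and strictly inside the pentagon $\{R_1<C_1,\,R_2<C_2,\,R_1+R_2<C_{12}\}$: taking $R_1+R_2=H(U,V)$ and $R_2=H(U,V)-R_1$, a feasible $R_1$ exists precisely because $\max\{H(U\mid V),\,H(U,V)-C_2\}<\min\{H(U),\,C_1\}$, each of the four pairwise inequalities being one of the hypotheses, the trivial bound $H(U\mid V)\le H(U)$, or $H(U,V)<C_{12}<C_1+C_2$ (the degenerate case $H(U\mid V)=H(U)$ of independent sources is handled directly by $(R_1,R_2)=(H(U),H(V))$). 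Fixing such $(R_1,R_2)$, I would Slepian--Wolf encode $(U^n,V^n)$ at rates $(R_1,R_2)$ and transmit the index pair over the channel with a code achieving the rate point $(R_1,R_2)$ on the compound MAC; since the receiver knows $\bm\theta$ and, with independent codebooks, the received powers --- hence the capacity pentagon --- are the same for every $\bm\theta$, a single codebook makes the index error vanish uniformly in $\bm\theta$, and composing with the vanishing Slepian--Wolf error yields reliable communication.

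\emph{The main obstacle.} The only non-routine step is the sum-rate converse. Because the sources are correlated, the two non-cooperating encoders can nonetheless make $X_1^n$ and $X_2^n$ correlated, and over a \emph{coherent} Gaussian MAC this would enlarge the sum bound to $\log(1+(g_1\sqrt{P_1}+g_2\sqrt{P_2})^2/N)$. The key point is that a valid code must be reliable for \emph{all} phases, so the converse may be argued after averaging over $\bm\theta$, and this averaging annihilates exactly the cross term $\mathbb{E}[e^{i(\theta_1-\theta_2)}X_{1i}\overline{X_{2i}}]$ --- which is what makes the smaller, phase-incoherent sum constraint tight and hence makes separate source--channel coding optimal.
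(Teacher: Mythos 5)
Your proposal is correct, but the mechanism you use for the crucial sum-rate converse differs from the paper's. The paper itself imports Theorem~\ref{Theorem_Capacity_MAC} from Abi Abdallah et al.\ and, for the analogous theorems it does prove (MARC, IC, IRC), its converse recipe is: fix $\bm\theta$, apply Fano and data processing, single-letterize to get $H(U,V)\le \min_{\Bt} I(X_1,X_2;g_1e^{j\theta_1}X_1+g_2e^{j\theta_2}X_2+Z)$ for an arbitrary (possibly correlated) joint input law, and then invoke Lemma~\ref{Lemma_Gaussian_Optimal}: under an \emph{adversarial} choice of phases the cross-correlation term $2g_1g_2\sqrt{P_1P_2}\,\Re\{\rho\, e^{j(\theta_1-\theta_2)}\}$ can always be driven non-positive, so the max--min is attained by independent Gaussians and equals $\log(1+(g_1^2P_1+g_2^2P_2)/N)$. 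You instead place a uniform prior on $\bm\theta$, use $\mathbb{E}_{\bm\theta}[P_e(\bm\theta)]\le\sup_{\bm\theta}P_e(\bm\theta)$ to run Fano after averaging, and let the averaging annihilate the cross term via $\mathbb{E}[e^{j(\theta_1-\theta_2)}]=0$ (legitimate because the encoders, not knowing $\bm\theta$, produce inputs independent of it). Both routes exploit that the compound requirement dominates either the worst case or the average; yours is shorter, avoids the max--min lemma altogether, and as a bonus covers verbatim the ergodic i.i.d.-phase setting mentioned in the paper's remarks, while the paper's adversarial-minimum formulation is the reusable tool behind its MARC/IC/IRC converses, including minima over restricted phase sets such as ${\bf\Phi}_c$ in the IRC. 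Your achievability is the same separation scheme as the paper's (Slepian--Wolf plus a compound Gaussian MAC code with independent codebooks and receiver CSI), and your feasibility argument for locating $(R_1,R_2)$ is sound; just take the sum rate to be $H(U,V)+\epsilon$ rather than exactly $H(U,V)$ so that the standard random-binning proof of Slepian--Wolf applies --- the strictness of $H(U,V)<C_{12}$ leaves the necessary room, so this is only an $\epsilon$-bookkeeping point, not a gap.
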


{Also, the recent work \cite{Dabora:2011} addresses the same problem for a phase fading Gaussian
multiple access relay channel (MARC) and proves a separation theorem under some channel coefficient
conditions. For the achievability part, the authors use the results of
\cite{Kramer_Gupta_Kumar_famous:2005}, \cite{Kramer_MARC_Main1:2004}, and
\cite{Kramer_MARC_Main2:2004} based on a combination of {\em regular} Markov encoding at the
transmitters and {\em backward} decoding at the receiver \cite{Kramer_WRC:2003}. In particular, in
order to derive the achievable region for discrete-memoryless MARC, the authors of
\cite{Kramer_MARC_Main1:2004} use codebooks of the same size which is referred to as regular Markov
encoding. This is in contrast with block Markov encoding which was introduced by Cover and El Gamal
in \cite{Cover:1979} for the relay channel. There, the encoding is done using codebooks of
different sizes and is referred to as {\em irregular} block Markov encoding.

In this paper, we consider a more general PI-MARC, in which one of the encoders is helped by the
other one causally or non-causally. We refer to such networks as phase-incoherent unidirectional
cooperative MARCs or PI-UC-MARCs for short. Furthermore, we also prove separation theorems for a
phase-incoherent interference channel (PI-IC) under strong interference conditions and
phase-incoherent interference relay channel (PI-IRC) under specific strong interference gain conditions.} 




The networks that we consider and for which we prove our results are listed as follows:
\begin{itemize}
\item PI-UC-MARC with {\em non-causal} (NC) cooperation between transmitters and with strong
    path gains from transmitters to the relay. We refer to this network as phase incoherent
    unidirectional non-causal cooperative (PI-UNCC)-MARC. By removing the relay, the results
    can be specialized to the case of a MAC (PI-UNCC-MAC).
\item PI-UC-MARC with {\em causal} (C) cooperation between transmitters and with strong path
    gains from transmitters to the relay. This network is called a phase-incoherent
    unidirectional causal cooperative (PI-UCC)-MARC. By removing the relay, the results can be
    specialized to the case of a MAC (PI-UCC-MAC).
\item Phase incoherent interference channel (PI-IC) in strong interference regime.
\item Phase incoherent interference relay channel (PI-IRC) in a specific strong interference
    regime with strong path gains from transmitters to the relay.
\end{itemize}

We show that if the phase shifts are unknown to the transmitters, then the optimal performance is
no better than the scenario in which the information sources are first source coded and then
channel coded separately, i.e., the correlation between the sources is not helpful to enlarge the
achievable region, as opposed to cases where the transmitters have knowledge of the phase shifts
and could potentially use beamforming, for example, to joint source-channel code the data and
achieve higher rates. Although we assume non-ergodic phase shifts throughout the paper, as in
\cite{FadiAbdallah_Caire:2008}, our results are also true for the ergodic case, where the phases
change i.i.d. from symbol to symbol. The contributions of this work are stated in the form of four
separation theorems that are given in the following sections.

Further, we conjecture that optimality of separation is true not only for the specific gain
conditions we state, but also for all possible values of path gains. Hence, we conjecture that
separation is optimal for unrestricted forms of the phase incoherent Gaussian phase fading channels
discussed in this paper. The approach we used here to prove the separation theorems which is based
on computing necessary and sufficient conditions for reliable communication, however, may not be
viable to
prove the conjecture. 

The rest of this paper is organized as follows. We introduce the phase asynchronous multi-user
networks considered in this work in Section \ref{preliminaries} along with a key lemma that we use
several times in the paper. In Section \ref{Section-PI-MARC}, we define the general problem of the
joint source-channel coding for a PI-MARC and state a separation theorem for it. In Sections
\ref{Section-PI-UNCC-MARC} and \ref{Section-PI-UCC-MARC}, we state and prove separation theorems
under specific gain conditions for a class of phase asynchronous MARCs in which the encoders
cooperate unidirectionally both non-causally and causally respectively. Next, In Sections
\ref{Section-PI-IC} and \ref{Section-PI-IRC}, we consider joint source-channel coding problem for
interference channels and interference relay channels under phase uncertainty respectively and
likewise state and prove separation theorems for them under strong interference conditions. We
finally conclude the results in Section \ref{Conclusion} along with a conjecture.


\section{Network Models and a Key Lemma}\label{preliminaries}
Consider two finite alphabet sources $\{U_i,V_i\}$ with correlated outputs that are drawn according
to a distribution $P[U_{i}=u,V_{i}=v] = p(u,v)$. The sources are memoryless, i.e., $(U_i,V_i)$'s
are independent and identically distributed (i.i.d). Both of the sources are to be transmitted to
the corresponding destinations through continuous alphabet and discrete-time memoryless non-ergodic
Gaussian channel models. Channels are parameterized by the phase shifts that are introduced by
different paths of the network which are, as a realistic assumption for wireless networks, not
known to the transmitters. The vector $\Bt$ denotes the non-ergodic phase fading parameters. For
simplicity, throughout the paper, we assume that transmitter node with index $i \in \{1,2,r\}$ has
power constraint $P_i$ and the noise power at all corresponding receiving nodes is $N$.

In the models that we consider, the receiver(s) are fully aware of $\Bt$. However, the transmitters
do not have access to the channel state information (CSI), $\Bt$, but only the knowledge of the
family of channels over which the communication is done and the code design must be robust for all
$\Bt$. Such channels are referred to as {\em compound} channels \cite{Csiszar:1981},
\cite{Wolfowitz:1978}. Nevertheless, in order to avoid ambiguity, we call the particular channel
under consideration a {\em phase-incoherent} (PI) channel with correlated sources. In the sequel,
we introduce the channel models
to be considered in this paper. 
\subsection{Multiple Access Channel (MAC)}\label{Introduction_MAC}

\begin{figure}
\hspace{0cm}\resizebox*{!}{2.2in}{\centering{\includegraphics*[angle = 90,viewport=-30 -168 200
600]
{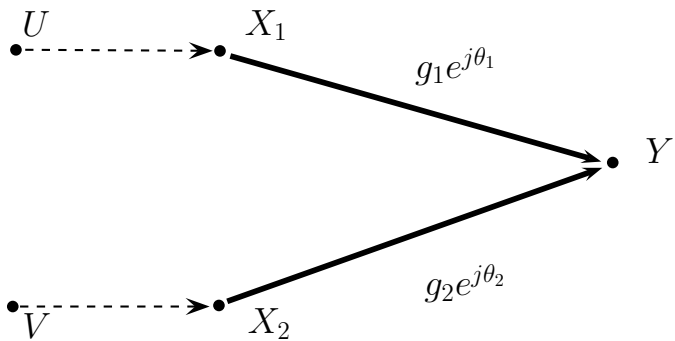}} }
\caption{Correlated sources and phase incoherent multiple access channel.} \label{PI_MAC}\vspace{0cm}
\end{figure}

A phase incoherent multiple access channel (PI-MAC)
$\left({\mathcal{X}_1}\times{\mathcal{X}_2},\mathcal{Y},p_{\bm\theta}(y\vert x_1,x_2)\right)$ with
parameter $\Bt  = (\theta_1,\theta_2) \in [0,2\pi)^{2}$ is illustrated in Fig. \ref{PI_MAC}. The
MAC is described by the relationship
\begin{align}\label{MAC channel-model}
Y_i & = h_1X_{1i} + h_2X_{2i} + Z_i,
\end{align}
\noindent where $X_{1i}, X_{2i}, Y_{i} \in \mathbb{C}$, $Z_i \sim {\mathcal{C}\mathcal{N}(0,N)}$ is
circularly symmetric complex Gaussian noise, $h_1 = g_1e^{j\theta_1},h_2=g_2e^{j\theta_2}$ are
non-ergodic complex channel gains, and parameter ${\bm \theta}$ represents the phase shifts
introduced by the channel to inputs $X_1$ and $X_2$, respectively. The amplitude gains, $g_1$ and
$g_2$, are assumed to be known at transmitters and can model e.g., line of sight path gains.

\subsection{MAC with Unidirectional Non-Causal Cooperation Between Transmitters (UNCC-MAC)}\label{Section_introduction_UNCC-MAC}

A PI-UNCC-MAC $\left({\mathcal{X}_1}\times{\mathcal{X}_2},\mathcal{Y},p_{\bm\theta}(y\vert
x_1,x_2)\right)$ is depicted in Fig. \ref{PI_UNCCMAC}. The first encoder $X_1$ has non-causal and
perfect knowledge of the second source $V$. The channel characteristic is the same as an ordinary
PI-MAC given in \eqref{MAC channel-model}.

\begin{figure}
\hspace{1cm}\resizebox*{!}{1.9in}{\centering{\includegraphics*[angle = 90,viewport=10 -168 215
600]
{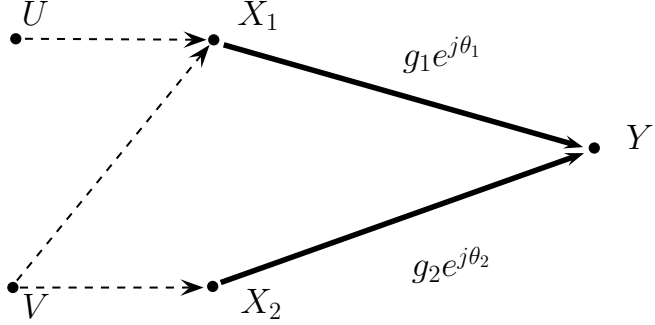}} }
\caption{Correlated sources and phase incoherent unidirectional non-causal cooperative multiple access channel.} \label{PI_UNCCMAC}\vspace{0cm}
\end{figure}

\subsection{MAC with Unidirectional Causal Cooperation Between Transmitters (UCC-MAC)}\label{Section_introduction_UCC-MAC}

Another multi-user model that is considered in this paper is a cooperative variation of the
multiple access channel
$\left({\mathcal{X}_1}\times{\mathcal{X}_2},\mathcal{Y}_{1}\times\mathcal{Y},p_{\bm\theta}(y_{1},y\vert
x_1,x_2)\right)$ where one of the transmitters can play the role of {\bf \underline a} {\em relay}
for the other. This channel model is shown in Fig. \ref{PI_UCC-MAC} where the first transmitter
(node indicated by $X_1$) can help the second transmitter (node indicated by $X_2$) to transmit its
information to the destination. However, node $2$ cannot help node $1$ and thus we refer to such a
channel as a unidirectional cooperative MAC. The received signal of the PI-UCC-MAC at the
destination is also given by \eqref{MAC channel-model}. At the transmitter/relay node, node $1$, we
have
\begin{align}
Y_{1i} = g_{21}e^{j\theta_{21}} X_{2i} + Z_{1i},
\end{align}
\noindent {where $g_{21}$ and $\theta_{21}$ are the path gain and the phase shift of the channel
from node $2$ to node $1$ respectively. The vector $\Bt$ for the PI-MAC has three elements and is
defined as $\Bt = (\theta_{1},\theta_{2},\theta_{21})$.}

\begin{figure}
\vspace{0cm} \hspace{1cm}\resizebox*{!}{2.2in}{\centering{\includegraphics*[angle = 90,viewport=-20
-168 198 600]
{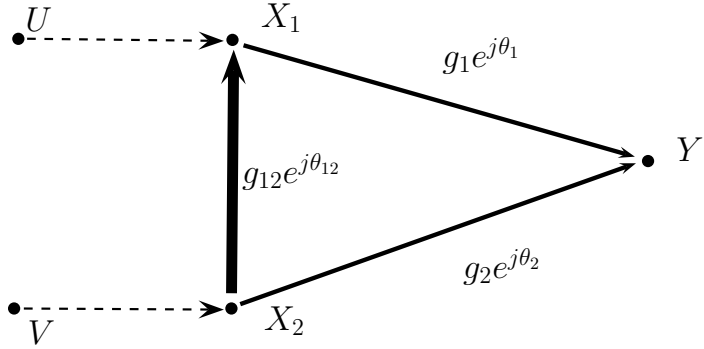}} }
\vspace{-1cm}\caption{Correlated sources and phase incoherent unidirectional causal cooperative
multiple access channel.} \label{PI_UCC-MAC}\vspace{0cm}
\end{figure}

\subsection{Multiple Access Relay Channel (MARC)}\label{Section_introduction_MARC}

A multiple access relay channel is a network with four nodes, two transmitters, a relay and a
destination. As depicted in Fig. \ref{PI_MARC}, in a MARC with phase fading
$\left({\mathcal{X}_1}\times{\mathcal{X}_2}\times{\mathcal{X}_r},\mathcal{Y},p_{\bm\theta}(y,y_{r}\vert
x_1,x_2,x_r)\right)$, two transmitters wish to reliably send their information to a common
destination, with the help of a relay. There are five paths in the network. The phase parameters
are not known to the transmitters and hence we refer to the MARC as PI-MARC. The received signal at
the destination is given by

\begin{align}\label{MARC-channel-model}
Y_i & = h_1X_{1i} + h_2X_{2i} + h_rX_r + Z_i,
\end{align}
\noindent where $X_{1i}, X_{2i}, Y_{i} \in \mathbb{C}$, $Z_i \sim {\mathcal{C}\mathcal{N}(0,N)}$ is
circularly symmetric complex Gaussian noise, $h_1 =
g_1e^{j\theta_1},h_2=g_2e^{j\theta_2},h_r=g_re^{j\theta_r}$ are non-ergodic complex channel gains,
and $\theta_1,\theta_2,\theta_r$ represent the phase shifts introduced by the channel to inputs
$X_1$, $X_2$ and $X_r$, respectively. 

Moreover, the signal received at the relay can be written as
\begin{align}\label{MARC-channel-model-relay}
Y_{ri} & = h_{1r}X_{1i} + h_{2r}X_{2i} + Z_{ir}
\end{align}
\noindent where $Z_{ir} \sim {\mathcal{C}\mathcal{N}(0,N)}$ and $h_{1r} =
g_{1r}e^{j\theta_{1r}},h_{2r}=g_{2r}e^{j\theta_{2r}}$ are the complex path gains with unknown
phases $\theta_{1r},\theta_{2r}$ at transmitters. The parameter $\Bt =
(\theta_1,\theta_2,\theta_r,\theta_{1r},\theta_{2r}) \in [0,2\pi)^{5}$ of the PI-MARC includes all
of the fading phases in different paths.

\begin{figure}
\hspace{1cm}\resizebox*{!}{2.2in}{\centering{\includegraphics*[angle = 90,viewport=-20 -158 198
600]
{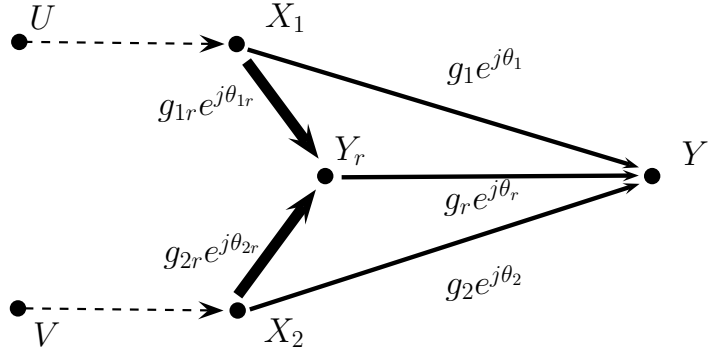}} }
\caption{Correlated sources and phase incoherent multiple access relay channel.} \label{PI_MARC}
\end{figure}

\subsection{MARC with Unidirectional Non-Causal Cooperation Between Transmitters (UNCC-MARC)}\label{Section_introduction_UNCC-MARC}

An important multi-user network that we consider is a unidirectional cooperative MARC, in which the
first encoder has non-causal access to the second source $V$. Indeed, UNCC-MARC is a UNCC-MAC with
a relay. The channel model is similar to an ordinary MARC, but the setup of the sources and
encoders are different. Fig. \ref{PI_UNCCMARC} depicts a PI-UNCC-MARC.

Like PI-MARC, the input/output relationships of the channel for the receiver and the relay are
given by \eqref{MARC-channel-model} and \eqref{MARC-channel-model-relay}. The parameter $\Bt$ is
the same as that of the ordinary MARC.

\begin{figure}
\hspace{1cm}\resizebox*{!}{2.2in}{\centering{\includegraphics*[angle = 90,viewport=-20 -158 198
600]
{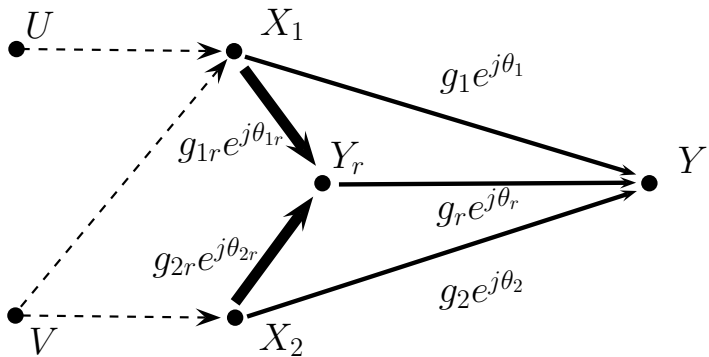}} }
\caption{Correlated sources and phase incoherent multiple access relay channel with unidirectional non-causal
cooperation between the encoders.} \label{PI_UNCCMARC}
\end{figure}

\subsection{MARC with Unidirectional Causal Cooperation Between Transmitters (UCC-MARC)}\label{Section_introduction_UNCC-MARC}

We also consider sending sources $U,V$ over a PI-MARC with causal unidirectional cooperation
between the encoders denoted by
$\left({\mathcal{X}_1}\times{\mathcal{X}_2}\times{\mathcal{X}_r},\mathcal{Y}_{1}\times\mathcal{Y}_{r}\times\mathcal{Y},p_{\bm\theta}(y_{1},y_{r},y\vert
x_1,x_2,x_r)\right)$. As it can be seen from Figure \ref{PI_UCCMARC}, the encoder $X_1$ does not
have non-causal knowledge about $V$, but it receives a noisy phase faded version of $X_2$ through
the link from node $2$ to node $1$. Again, \eqref{MARC-channel-model} and
\eqref{MARC-channel-model-relay} describe the input/output relationships of the channel for the
receiver and the relay. Additionally, the relationship
\begin{align}
Y_{1i} = g_{21}e^{j\theta_{21}} X_{2i} + Z_{1i}
\end{align}
\noindent describes the cooperative link from node $2$ to node $1$ which completes the definition
of a PI-UCC-MARC. The parameter $\Bt$ for the PI-UCC-MARC is the vector
$\Bt=(\theta_1,\theta_2,\theta_r,\theta_{1r},\theta_{2r},\theta_{12})$.

\begin{figure}
\hspace{1cm}\resizebox*{!}{2.2in}{\centering{\includegraphics*[angle = 90,viewport=-20 -158 198
600]
{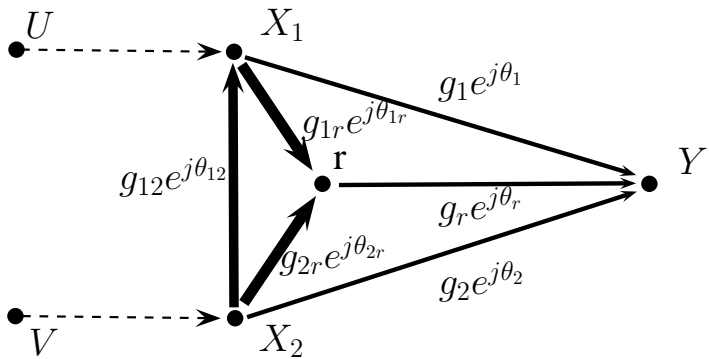}} }
\caption{Correlated sources and phase incoherent multiple access relay channel with unidirectional causal
cooperation between the encoders.} \label{PI_UCCMARC}
\end{figure}

\subsection{Interference Channel (IC)}\label{Section_introduction_IC}

Another network model we consider in this paper is the two-user interference channel with strong
interference. A continuous alphabet, discrete-time memoryless interference channel (IC) with phase
fading is denoted by
$\left({\mathcal{X}_1}\times{\mathcal{X}_2},\mathcal{Y}_{1}\times\mathcal{Y}_{2},p_{\Bt_1,\Bt_2}(y_{1},y_{2}\vert
x_1,x_2)\right)$ and its probabilistic characterization is described by the relationship
\begin{align}\label{IC-channel-model1}
Y_{1i} & = g_{11}e^{j\theta_{11}}X_{1i} + g_{21}e^{j\theta_{21}}X_{2i} + Z_{1i}, \\
Y_{2i} & = g_{12}e^{j\theta_{12}}X_{1i} + g_{22}e^{j\theta_{22}}X_{2i} + Z_{2i},\label{IC-channel-model2}
\end{align}
\noindent where $X_{1i}, X_{2i}, Y_{i} \in \mathbb{C}$, $Z_i \sim {\mathcal{C}\mathcal{N}(0,N)}$ is
circularly symmetric complex Gaussian noise, $g_{11},g_{12},g_{21},g_{22}$ are non-ergodic complex
channel gains, and parameters ${\Bt_1} = (\theta_{11},\theta_{21}) \in [0,2\pi)^{2}$, ${\Bt_2} =
(\theta_{12},\theta_{22}) \in [0,2\pi)^{2}$ represents the phase shifts introduced by the channel
to inputs $X_1$ and $X_2$, respectively. Figure \ref{PI_IC} depicts such a channel. We refer to the
IC defined by \eqref{IC-channel-model1} and \eqref{IC-channel-model2} as PI-IC if we assume the
phase shift parameters $\Bt_1,\Bt_2$ are not known to the transmitters.

\begin{figure}
\hspace{1cm}\resizebox*{!}{2.2in}{\centering{\includegraphics*[angle = 90,viewport=-20 -168 198
600]
{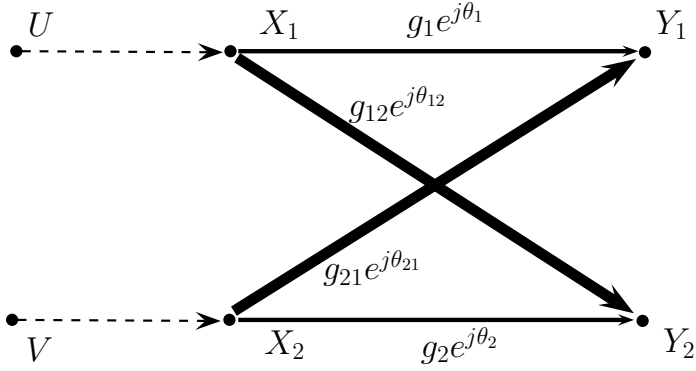}} }
\caption{Correlated sources and phase incoherent interference channel} \label{PI_IC}\vspace{0cm}
\end{figure}

\subsection{Interference Relay Channel (IRC)}\label{Section_introduction_IRC}

The last network model we consider is an interference channel with two transmitters and a relay
referred to as {\em interference relay channel} (IRC), depicted in Figure \ref{PI_IRC}. Again, we
consider phase fading at all paths, unknown to the transmitters and thus call the channel a
phase-incoherent IRC (PI-IRC). The PI-IRC $({\mathcal{X}_1}\times {\mathcal{X}_2} \times
{\mathcal{X}_r} , {\mathcal{Y}_1}\times{\mathcal{Y}_2}\times{\mathcal{Y}_r},
p_{\bm\theta}({y_1,y_2,y_r}\vert x_1,x_2,x_r))$ is described by relationships
\begin{align}
Y_{1i} & = g_{11}e^{j\theta_{11}}X_{1i} + g_{21}e^{j\theta_{21}}X_{2i} + g_{r1}e^{j\theta_{r1}}X_{ri} + Z_{1i},\nonumber\\
Y_{2i} & = g_{12}e^{j\theta_{12}}X_{1i} + g_{22}e^{j\theta_{22}}X_{2i} + g_{r2}e^{j\theta_{r2}}X_{ri} + Z_{2i},\nonumber\\
Y_{ri} & = g_{1r}e^{j\theta_{1r}}X_{1i} + g_{2r}e^{j\theta_{2r}}X_{2i} + Z_{ri},\nonumber
\end{align}
\noindent where $X_{1i}, X_{2i}, X_{ri}, Y_{1i}, Y_{2i}, Y_{ri} \in \mathbb{C}$, $Z_{1i}, Z_{2i},
Z_{ri} \sim {\mathcal{C}\mathcal{N}(0,N)}$ are circularly symmetric complex Gaussian noises,
$g_{11},$ $g_{21},$ $g_{r1},$ $g_{12},$ $g_{22},$ $g_{r2}$ are non-ergodic complex channel gains,
and parameter ${\Bt} = (\theta_{11},\theta_{21},\theta_{r1},\\ \theta_{12},\theta_{22},
\theta_{r2},\theta_{1r},\theta_{2r}) \in [0,2\pi)^{8}$ represents the phase shifts introduced by
the channel to inputs $X_1$, $X_2$ and $X_r$, respectively.

\begin{figure}
\hspace{1cm}\resizebox*{!}{2.2in}{\centering{\includegraphics*[angle = 90,viewport=-20 -218 195
600]
{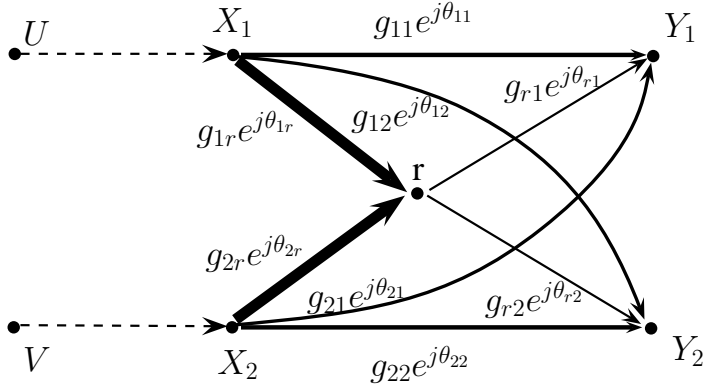}} }
\caption{Correlated sources and phase incoherent interference relay channel} \label{PI_IRC}\vspace{0cm}
\end{figure}

\subsection{Key Lemma}\label{key_lemma_section}

\begin{definition}
Let $\bX{} = (X_1,X_2,\cdots,X_m)$, be a vector of random variables with joint distribution
$p_{\bX{}}$ and $\max_{i} {\mathbb{E}}\|X_{i}\|^{2} \leq \infty$. 
Also let the scalar RV $V \triangleq \sum_{i=1}^{m}g_{i}e^{j\theta_{i}}X_{i} + Z$, where
$g_ie^{j\theta_{i}}$ are arbitrary complex coefficients and $Z\sim \mathcal{CN}(0,N)$. \thmend
\end{definition}

We now state the following lemma which asserts that the minimum over $\Bt =
(\theta_1,\theta_2,\cdots,\theta_m)$ of the mutual information between $\bX{}$ and $V$, is
maximized when $\bX{}$ is a zero-mean Gaussian vector with independent elements, i.e., RVs
$X_1,X_2,\cdots,X_m$ are independent Gaussians with zero mean.

{\em Notation}: For convenience, we denote the mutual information between $\bX{}$ and $V$  by
\begin{align}
B_{\bm\theta}(p_{\bX{}}) \triangleq I(\bX{};\sum_{i=1}^{m}g_{i}e^{j\theta_{i}}X_{i} + Z).\nonumber
\end{align}


\begin{lemma}\label{Lemma_Gaussian_Optimal}
Let ${\mathcal{P}} = \{p_{\bX{}}: \mathbb{E}\|X_{i}\|^{2} \leq P_{i}, \forall i\}$ and $p^*_{\bX{}}
\in {\mathcal{P}}$ be a zero-mean Gaussian distribution with independent elements and
$\mathbb{E}\|X_{i}\|^{2} = P_{i}, \forall i$. Then,
\begin{align}
\max_{p_{\bX{}} \in {\mathcal{P}}} \min_{\Bt} B_{\bm\theta}(p_{\bX{}}) = \log\left({1 +}\sum_{i=1}^{m} g_i^2{P_i/N}\right)
 = B_{\bm\theta}(p^{*}_{\bX{}}), \nonumber
\end{align}
\noindent i.e., when $\Bt$ is chosen adversarially, the best ${\bX{}}$ is a zero-mean Gaussian
vector with independent elements and ${\rm Var}(X_i) = P_i, \forall i$. \thmend
\end{lemma}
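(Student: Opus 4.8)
The proof splits into two inequalities, and the plan is to handle them separately. The ``$\ge$'' direction is achievability: the explicit Gaussian input $p^*_{\bX{}}$ should meet the claimed value against \emph{every} $\bm\theta$. The ``$\le$'' direction is the converse: no input in $\mathcal{P}$ can exceed that value once $\bm\theta$ is chosen adversarially. The one idea that makes the converse work is to replace the adversarial $\min_{\bm\theta}$ by an average over uniformly random phases, which ``dephases'' any cross-correlation a clever input might try to exploit.

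\emph{Achievability.} Here I would just evaluate $B_{\bm\theta}(p^*_{\bX{}})$ directly. Under $p^*_{\bX{}}$ the components $X_1,\dots,X_m$ are independent, zero-mean, circularly symmetric complex Gaussian with $\mathrm{Var}(X_i)=P_i$, so for any $\bm\theta$ the sum $\sum_{i=1}^m g_i e^{j\theta_i}X_i$ is again a zero-mean circularly symmetric complex Gaussian, of variance $\sum_i g_i^2 P_i$ independent of $\bm\theta$ (the phases merely rotate the independent summands). Hence $V\sim\mathcal{C}\mathcal{N}\big(0,\sum_i g_i^2P_i+N\big)$ for every $\bm\theta$, and since $h(V\mid\bX{})=h(Z)=\log(\pi e N)$ we get $B_{\bm\theta}(p^*_{\bX{}})=h(V)-h(Z)=\log(1+\sum_i g_i^2P_i/N)$, constant in $\bm\theta$. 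Thus $\min_{\bm\theta}B_{\bm\theta}(p^*_{\bX{}})$ equals the right-hand side of the lemma, which lower-bounds the max--min.

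\emph{Converse.} Fix an arbitrary $p_{\bX{}}\in\mathcal{P}$ and introduce an auxiliary phase vector $\bm\Theta=(\Theta_1,\dots,\Theta_m)$ with i.i.d.\ $\mathrm{Unif}[0,2\pi)$ entries, independent of $(\bX{},Z)$. Since a minimum is dominated by an average, $\min_{\bm\theta}B_{\bm\theta}(p_{\bX{}})\le\mathbb{E}_{\bm\Theta}[B_{\bm\Theta}(p_{\bX{}})]$. For each fixed phase value, conditioning on $\bX{}$ reduces the output to a deterministic shift of $Z$, so $B_{\bm\theta}(p_{\bX{}})=h(V_{\bm\theta})-\log(\pi e N)$ with $V_{\bm\theta}=\sum_i g_i e^{j\theta_i}X_i+Z$; then by translation invariance of $h$ and the fact that a circularly symmetric complex Gaussian maximizes differential entropy under a second-moment constraint, $h(V_{\bm\theta})\le\log(\pi e\,\mathbb{E}|V_{\bm\theta}-\mathbb{E}V_{\bm\theta}|^2)$ with $V_{\bm\theta}-\mathbb{E}V_{\bm\theta}=\sum_i g_i e^{j\theta_i}(X_i-\mathbb{E}X_i)+Z$. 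Averaging over $\bm\Theta$ and pulling the $\log$ out by Jensen, it remains to bound $\mathbb{E}_{\bm\Theta}\,\mathbb{E}|V_{\bm\Theta}-\mathbb{E}V_{\bm\Theta}|^2$. Expanding this as $\sum_{i,k}g_ig_k\,\mathbb{E}_{\bm\Theta}[e^{j(\Theta_i-\Theta_k)}]\,\mathbb{E}[(X_i-\mathbb{E}X_i)\overline{(X_k-\mathbb{E}X_k)}]+N$, the diagonal contributes $\sum_i g_i^2\mathrm{Var}(X_i)+N\le\sum_i g_i^2P_i+N$, while every off-diagonal term vanishes because $\mathbb{E}_{\bm\Theta}[e^{j(\Theta_i-\Theta_k)}]=\mathbb{E}[e^{j\Theta_i}]\,\mathbb{E}[e^{-j\Theta_k}]=0$ for $i\neq k$. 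Hence $\mathbb{E}_{\bm\Theta}[B_{\bm\Theta}(p_{\bX{}})]\le\log(1+\sum_i g_i^2P_i/N)$ for every $p_{\bX{}}\in\mathcal{P}$, which is the matching upper bound on the max--min; together with achievability this gives the claimed equality.

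The crux is the converse, and the genuinely load-bearing step is the phase averaging: it annihilates exactly the cross-correlations $\mathbb{E}[(X_i-\mathbb{E}X_i)\overline{(X_k-\mathbb{E}X_k)}]$ that a correlated input would need in order to combine energy coherently, so that only the per-component powers $\mathrm{Var}(X_i)\le P_i$ survive. Everything else is routine bookkeeping: $\max_i\mathbb{E}\|X_i\|^2<\infty$ keeps the differential entropies finite so that $B_{\bm\theta}=h(V_{\bm\theta})-h(Z)$ is legitimate; a nonzero mean of $X_i$ only adds a deterministic offset to $V_{\bm\theta}$ and thus cannot help; and the scalar bound $h(W)\le\log(\pi e\,\mathbb{E}|W-\mathbb{E}W|^2)$ is just AM--GM applied to the $2\times2$ covariance of $(\mathrm{Re}\,W,\mathrm{Im}\,W)$. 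The argument uses the auxiliary phases only through $\mathbb{E}[e^{j\Theta_i}]=0$, which is also what makes the ergodic i.i.d.\ phase-fading variant go through.
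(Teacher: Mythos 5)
Your proof is correct, and while it shares the paper's main ingredient (bounding $h(V_{\bm\theta})$ by the entropy of a Gaussian with the same second moment, so that only the variance expansion with cross-correlation terms matters), it handles the adversarial minimum over $\Bt$ differently. The paper keeps $\Bt$ deterministic: it writes $\sigma_V^2 = \sum_i g_i^2 P_i + N + 2\sum_{i<j} g_i g_j \sqrt{P_i P_j}\,\Re\{\rho_{ij}e^{j(\theta_i-\theta_j)}\}$ and asserts that if some $\rho_{ij}\neq 0$ the phases can be chosen to make the correlation term strictly negative, concluding that $\min_{\Bt}\sigma_V^2$ is maximized by uncorrelated inputs. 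You instead upper-bound the min by an average over i.i.d.\ uniform phases, pull the $\log$ out by Jensen, and let $\mathbb{E}[e^{j(\Theta_i-\Theta_k)}]=0$ annihilate every cross term. Your route is slightly longer but buys two things: it rigorously substantiates the step the paper only asserts (with several nonzero $\rho_{ij}$ sharing the same phase variables one cannot make each term negative separately; the zero-mean-over-phases argument is exactly what shows the whole sum can be driven nonpositive), and it transfers verbatim to the ergodic i.i.d.\ phase-fading setting of Remark \ref{remark_key_lemma_ergodic}, where the quantity of interest is precisely $\mathbb{E}_{\Bt}B_{\bm\theta}(p_{\bX{}})$. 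Your achievability half (direct evaluation at independent circularly symmetric Gaussians, giving a value constant in $\Bt$) is the computation the paper leaves implicit in the claim $B_{\bm\theta}(p^{*}_{\bX{}})=\log\bigl(1+\sum_i g_i^2 P_i/N\bigr)$, and your handling of nonzero means via $\mathrm{Var}(X_i)\leq \mathbb{E}\|X_i\|^2\leq P_i$ is a small point the paper glosses over as well.
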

\begin{proof}

By definition, we have
\begin{align}
B_{\bm\theta}(p_{\bX{}}) & = h(g_1e^{j{\theta_1}}X_{1}+g_2e^{j{\theta_2}}X_{2}+\cdots+g_Ne^{j{\theta_m}}X_{m}+Z) - h(Z). \nonumber
\end{align}
\noindent By letting $\mathbb{E}(X_{i}X_{j}) = \rho_{ij}\sqrt{P_{i}P_{j}}$, it can be easily seen
that the RV $V$ has a fixed variance $\sigma^{2}_{V}$ which is equal to

\begin{align}\label{V_variance}
{\sigma^2_{V}} = \left(\sum_{i=1}^{m} g_i^2{P_i}\right)+N+2\sum_{i<j} g_ig_j{{\sqrt{{P_i}{P_j}}}} \ \Re {\left\{\rho_{ij}
e^{j(\theta_i-\theta_j)}\right\}}.
\end{align}
\noindent Using the fact that for a given variance $\sigma^2_V$, the Gaussian distribution
maximizes the differential entropy $h(V)$ \cite{Cover:2006}, we can bound
$B_{\bm\theta}(p_{\bX{}})$ as

\begin{align}\label{Gaussian_more_than_anything_equation}
B_{\bm\theta}(p_{\bX{}}) & \leq {1 \over 2} \log (2\pi e \sigma^2_{V}) - h(Z).
\end{align}

Next, note that $\min_{\Bt} \sigma^{2}_{V}$ is maximized when $\rho_{ij} = 0, \forall i,j$. It can
be seen from \eqref{V_variance} that if $\rho_{ij} \neq 0$, the parameters
$\theta_{1},\theta_{2},\cdots,\theta_{m}$ can be chosen such that the term $2\sum_{i<j}
g_ig_j{{\sqrt{{P_i}{P_j}}}} \ \Re {\left\{\rho_{ij} e^{j(\theta_i-\theta_j)}\right\}}$ is strictly
negative. Therefore, independent Gaussians ($\rho_{ij} = 0, \forall i,j$) maximize the right hand
side of \eqref{Gaussian_more_than_anything_equation} and the lemma is proved.
\end{proof}

\begin{remark}\label{remark_key_lemma_ergodic}
For the ergodic setting, where $\Bt$ is i.i.d. from channel use to channel use, uniformly
distributed over $[0,2\pi)^{m}$, and the {\em averaged} mutual information over $\Bt$ is to be
maximized, a similar result is given in \cite[Thm. 2]{Kramer_Gupta_Kumar_famous:2005}.
Specifically,
\begin{align}
\max_{p_{\bX{}}} {\mathbb{E}}_{\Bt} B_{\bm\theta}(p_{\bX{}}) = \log\left({1 +}\sum_{i=1}^{m} g_i^2{P_i/N}\right). \nonumber
\end{align}
\thmend
\end{remark}

\section{Phase Incoherent Multiple access Relay Channel} \label{Section-PI-MARC}

In this section, we formulate the problem of source-channel coding for the PI-MARC introduced in
Section \ref{Section_introduction_MARC} and state a separation theorem for it \cite{Dabora:2011}.
The definitions and problem formulation given in this section will be of use for the other networks
in the paper.

\subsection{Preliminaries} \label{definitions}

\begin{definition}
{\em Joint source-channel code}: A joint source-channel code of length $n$ for the PI-MARC
introduced in Section \ref{Section_introduction_MARC} with correlated sources is defined by
\begin{enumerate}
\item {Two encoding functions
\begin{align*}
\left(x_{11},x_{12},\cdots,x_{1n}\right) = {\bbx^{n}_{1}}:& \mathcal{U}^n \rightarrow \mathcal{X}_{1}^n\\
\left(x_{21},x_{22},\cdots,x_{2n}\right) = {\bbx^{n}_{2}}:& \mathcal{V}^n \rightarrow \mathcal{X}_{2}^n,
\end{align*}
\noindent that map the source outputs to the codewords. Furthermore, we define relay encoding
functions by
\begin{align}
x_{ri} = f_{i}(y_{r1},y_{r2},\cdots,y_{r(i-1)}), \ \ i=1,2,\cdots,n. \nonumber
\end{align}

The sets of codewords are denoted by the codebook $\mathcal{C} = \{(\bx{1}(\bu),\bx{2}(\bv)) :
\bu \in \mathcal{U}^n, \bv \in \mathcal{V}^n\}$}.
\item{Power constraint $P_1$, $P_2$ and $P_r$ at the transmitters, i.e.,
\begin{align}\label{Power_constraint}
\mathbb{E}\left[{1 \over n} \sum_{i=1}^{n}\|X_{ji}\|^2\right] \leq P_j, \ j=1,2,r,
\end{align}
\noindent where $\mathbb{E}$ is the expectation operation over the distribution induced by $\bU^{n}, \bV^{n}$}. 
\item{A decoding function
\[g^n_{\bm\theta} : \mathcal{Y}^n \rightarrow  \mathcal{U}^n \times  \mathcal{V}^n. \]
}
\end{enumerate}
\thmend
\end{definition}
{ Upon reception of the received vector $\bY^n$, the receiver decodes $(\hat{\bU}^n,\hat{\bV}^n) =
g_{\bm\theta}(\bY^n)$ as the transmitted source outputs. The probability of an erroneous decoding
depends on $\Bt$ and is given by}
\begin{align}
P_e^n(\bm\theta) & = P\{(\bU^n,\bV^n) \neq (\hat{\bU}^n,\hat{\bV}^n) \vert {\bm\theta}\} \nonumber\\
& = \sum_{({\bu}^n,{\bv}^n) \in  {\mathcal{U}^n \times   \mathcal{V}^n}}p({\bu}^n,{\bv}^n) \times
P\{(\hat{\bU}^n,\hat{\bV}^n) \neq (\bu^n,\bv^n) \ \vert ({\bu}^n,{\bv}^n) , {\bm\theta}\}. \nonumber
\end{align}


\begin{definition} We say the source $\{U_i,V_i\}_{i=1}^{n}$ of i.i.d. discrete random variables
with joint probability mass function $p(u,v)$ {\em can be reliably sent} over the PI-MARC, if there
exists a sequence of encoding functions
${\mathcal{E}_{n}}\triangleq\{\bbx^{n}_{1}(\bU^{n}),\bbx^{n}_{2}(\bV^{n}),f_{1},f_{2},\cdots,f_{n}\}$
and decoders $g^n_{\bm\theta}$ 
such that the output sequences $\bU^{n}$ and $\bV^{n}$ of the source can be estimated with
asymptotically small probability of error ({uniformly} over all parameters $\bm\theta$) at the
receiver side from the received sequence $\bY^{n}$, i.e.,
\begin{align}\label{main_error_probability}
\left[\sup_{\bm\theta} P_e^n(\bm\theta)\right] \longrightarrow 0, \ \ {\rm as}   \  \ n \rightarrow \infty.
\end{align}
\thmend
\end{definition}

\begin{theorem}\label{Theorem_Capacity_MARC}{\em Reliable communication over a PI-MARC}: Consider a PI-MARC with power constraints $P_1,P_2,P_r$ on
the transmitters, fading amplitudes $g_1,g_2,g_r>0$ between the nodes and the receiver and
$g_{1r},g_{2r}>0$ between the transmitter and the relay, and the gain conditions
\begin{align}\label{Equation1_Condition_MARC}
g^{2}_{1r}P_1 & \geq g^{2}_{1}P_1 + g^{2}_{r}P_r, \\
g^{2}_{2r}P_1 & \geq g^{2}_{2}P_1 + g^{2}_{r}P_r.\label{Equation2_Condition_MARC}
\end{align}

A necessary condition for reliably sending the source pair $(\bU,\bV) \sim
{\prod_{i}}p(u_{i},v_{i})$, over a PI-MARC, is given by
\begin{align}
H(U \vert V) & \leq \log(1+(g_1^2P_1+g_r^2P_r)/N), \label{separation_1_MARC}\\
H(V \vert U) & \leq \log(1+(g_2^2P_2+g_r^2P_r)/N), \\
H(U,V) & \leq \log(1+(g_1^2P_1+g_2^2P_2+g_r^2P_r)/N). \label{separation_3_MARC}
\end{align}
\noindent Moreover, \eqref{separation_1_MARC}-\eqref{separation_3_MARC} also describes sufficient
conditions for reliable communications with $\leq$ replaced by $<$. \thmend
\end{theorem}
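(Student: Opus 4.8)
The plan is to establish the two directions separately. The necessary (converse) direction is a cut-set/Fano argument in which the Key Lemma (Lemma~\ref{Lemma_Gaussian_Optimal}) supplies the single-letter bounds, and it does not actually use the gain conditions \eqref{Equation1_Condition_MARC}--\eqref{Equation2_Condition_MARC}. The sufficient (achievability) direction uses separate source and channel coding: Slepian--Wolf coding of $(U,V)$ reduces the problem to a channel-coding problem, and a decode-and-forward code with independent Gaussian inputs achieves the stated region exactly; here the gain conditions are precisely what force the binding rate constraints to be the ones at the destination rather than at the relay.

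\emph{Converse.} Assume $\sup_{\bm\theta}P_e^n(\bm\theta)\to 0$ and fix an arbitrary $\bm\theta$. Fano's inequality gives $H(\bU^n,\bV^n\mid\bY^n,\bm\theta)\le n\epsilon_n$ with $\epsilon_n\to 0$, and, with a genie revealing $\bV^n$ (resp.\ $\bU^n$) to the decoder, $H(\bU^n\mid\bY^n,\bV^n,\bm\theta)\le n\epsilon_n$ (resp.\ $H(\bV^n\mid\bY^n,\bU^n,\bm\theta)\le n\epsilon_n$). For the joint bound, $nH(U,V)\le I(\bU^n,\bV^n;\bY^n\mid\bm\theta)+n\epsilon_n$; since $(\bU^n,\bV^n)\to(\bX{1}^n,\bX{2}^n,\bX{r}^n)\to\bY^n$, the channel is memoryless, and $X_{ri}$ is a deterministic causal function of $\mathbf{Y}_r^{i-1}$, the chain rule yields $I(\bU^n,\bV^n;\bY^n\mid\bm\theta)\le\sum_{i=1}^n I(X_{1i},X_{2i},X_{ri};Y_i\mid\bm\theta)\le\sum_{i=1}^n B_{\bm\theta}(p_{X_{1i},X_{2i},X_{ri}})$. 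Because this holds for every $\bm\theta$, I take the minimizing $\bm\theta$, push the minimum inside the sum, and apply Lemma~\ref{Lemma_Gaussian_Optimal} symbol by symbol; concavity of $\log$ together with the power constraints \eqref{Power_constraint} then converts per-symbol second moments to the average powers, giving $H(U,V)\le\log(1+(g_1^2P_1+g_2^2P_2+g_r^2P_r)/N)$. For $H(U\mid V)$, after revealing $\bV^n$ the symbol $X_{2i}$ is known and ``conditioning reduces entropy'' gives $h(Y_i\mid X_{2i},\bm\theta)\le h(g_1e^{j\theta_1}X_{1i}+g_re^{j\theta_r}X_{ri}+Z_i)$, so the two-variable case of Lemma~\ref{Lemma_Gaussian_Optimal} gives $H(U\mid V)\le\log(1+(g_1^2P_1+g_r^2P_r)/N)$; $H(V\mid U)$ is symmetric. (The only delicate point is the standard cut-set bookkeeping for the causally generated relay symbol; the relay-isolating cuts are not needed, since they only tighten the outer bound.)

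\emph{Achievability.} Suppose \eqref{separation_1_MARC}--\eqref{separation_3_MARC} hold strictly, and write $C_1,C_2,C_{12}$ for the three right-hand sides. Since $C_{12}\le C_1+C_2$ for these Gaussian expressions, strict feasibility of the entropy triple guarantees a rate pair $(R_1,R_2)$ lying simultaneously in the interior of the Slepian--Wolf region of $(U,V)$ (i.e.\ $R_1>H(U\mid V)$, $R_2>H(V\mid U)$, $R_1+R_2>H(U,V)$) and strictly inside $\{R_1<C_1,\ R_2<C_2,\ R_1+R_2<C_{12}\}$. Slepian--Wolf source coding then reduces the task to transmitting two independent nearly-uniform messages of rates $R_1,R_2$ over the PI-MARC. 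For that I use a decode-and-forward scheme with regular encoding and backward decoding (as in \cite{Kramer_MARC_Main1:2004,Kramer_Gupta_Kumar_famous:2005}) and \emph{independent} zero-mean Gaussian inputs $X_1\sim\mathcal{CN}(0,P_1)$, $X_2\sim\mathcal{CN}(0,P_2)$, $X_r\sim\mathcal{CN}(0,P_r)$, which by Lemma~\ref{Lemma_Gaussian_Optimal} is the input law maximizing the worst-case rate. With independent inputs the DF constraints $R_1\le\min\{I(X_1;Y_r\mid X_2,X_r),\,I(X_1,X_r;Y\mid X_2)\}$, $R_2\le\min\{I(X_2;Y_r\mid X_1,X_r),\,I(X_2,X_r;Y\mid X_1)\}$, $R_1+R_2\le\min\{I(X_1,X_2;Y_r\mid X_r),\,I(X_1,X_2,X_r;Y)\}$ become phase-independent: the relay-side terms are $\log(1+g_{1r}^2P_1/N)$, $\log(1+g_{2r}^2P_2/N)$, $\log(1+(g_{1r}^2P_1+g_{2r}^2P_2)/N)$, while the destination-side terms equal exactly $C_1,C_2,C_{12}$. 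The gain conditions \eqref{Equation1_Condition_MARC}--\eqref{Equation2_Condition_MARC}, together with their sum, say precisely that the destination-side terms are the smaller ones, so the achievable region collapses to $\{R_1<C_1,\ R_2<C_2,\ R_1+R_2<C_{12}\}$ and contains the chosen $(R_1,R_2)$.

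The step I expect to need the most care is the compound-channel uniformity on the achievability side: a single codebook (hence a single pair of source--channel encoders) must make $\sup_{\bm\theta\in[0,2\pi)^5}P_e^n(\bm\theta)\to 0$. Since the receivers know $\bm\theta$ and the DF rates above are phase-independent and strictly above $(R_1,R_2)$ for every $\bm\theta$, I would argue that the joint-typicality decoding error bounds are uniform in $\bm\theta$ --- the relevant error exponents are continuous in $\bm\theta$ on the compact parameter set and bounded away from zero --- which yields the required uniform convergence. The converse, by contrast, is essentially routine once Lemma~\ref{Lemma_Gaussian_Optimal} is in hand, the only subtlety being the causal relay symbol in the cut-set steps.
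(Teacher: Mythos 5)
Your architecture is the right one and matches how the paper handles this family of results: the paper proves Theorem~\ref{Theorem_Capacity_MARC} itself only by citing \cite[Theorem 4]{Dabora:2011}, but its own proofs of the neighbouring theorems (e.g., Theorem~\ref{Theorem_Capacity_MARC_noncausal}) use exactly your recipe --- Fano plus cut-set-type bounds plus Lemma~\ref{Lemma_Gaussian_Optimal} for the converse, and Slepian--Wolf followed by regular-encoding/backward-decoding decode-and-forward with independent Gaussian inputs for achievability, the gain conditions serving only to make the destination constraints the binding ones. Your achievability half is sound: the existence of a rate pair in the intersection of the two polytopes (using $C_{12}\le C_1+C_2$ and $H(U,V)\ge H(U|V)+H(V|U)$), the evaluation of the DF constraints for independent Gaussians, the role of \eqref{Equation1_Condition_MARC}--\eqref{Equation2_Condition_MARC} (read with the evident typo corrected, i.e.\ $g_{2r}^2P_2\ge g_2^2P_2+g_r^2P_r$), and the remark on uniformity over $\Bt$ are all fine.

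The converse, however, contains a step that fails as written. From $nH(U,V)\le \sum_{i=1}^{n}B_{\bm\theta}(p_{X_{1i},X_{2i},X_{ri}})+n\epsilon_n$ for every $\Bt$, you ``take the minimizing $\Bt$, push the minimum inside the sum, and apply Lemma~\ref{Lemma_Gaussian_Optimal} symbol by symbol.'' The exchange goes the wrong way: $\min_{\Bt}\sum_i B_{\bm\theta}(p_i)\ \ge\ \sum_i\min_{\Bt}B_{\bm\theta}(p_i)$, because the adversarial $\Bt$ must be common to all symbols, so bounding by the sum of per-symbol minima is not implied by what precedes it; and the opposite fix --- fixing one common $\Bt$ and maximizing each $p_i$ --- also fails, since at a fixed $\Bt$ correlated inputs can beamform above $\log(1+\sum_i g_i^2P_i/N)$. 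The repair is the ordering the paper uses in Lemma~\ref{Lemma_Converse_MARC}: single-letterize first by introducing the time-sharing variable as in \eqref{equation_new_RV_1r}--\eqref{equation_new_RV_1r_last} and dropping the conditioning via $h(\cdot|W)\le h(\cdot)$, so that the whole block is bounded by $B_{\bm\theta}$ evaluated at one time-mixture input law satisfying the average power constraint \eqref{single_power_constraint}; only then minimize over $\Bt$ and invoke the max-min identity of Lemma~\ref{Lemma_Gaussian_Optimal} (alternatively, average over $\Bt$ and use Remark~\ref{remark_key_lemma_ergodic}). The same reordering is needed for your $H(U|V)$ and $H(V|U)$ bounds; with it, your argument closes and coincides with the paper's treatment.
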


\begin{proof}
The theorem is the same as \cite[Theorem 4]{Dabora:2011}.
\end{proof}


In Sections \ref{Section-PI-UNCC-MARC} and \ref{Section-PI-UCC-MARC}, we study a more general
version of the PI-MARC in which the transmitters cooperate in a specific way. Indeed, we consider a
pair of correlated sources to be communicated over a phase incoherent (PI) multiple access relay
channel where one of the transmitters has {\em causal} or {\em non-causal} side information about
the message of the other. We refer to such channels as UC-MARC. In the non-causal case (see Fig.
\ref{PI_UNCCMAC}), there is no path between the transmitters and the first encoder knows both
sources outputs $\bU,\bV$, whereas in the causal case (see Fig. \ref{PI_UCCMARC}), the first
transmitter works as a relay for the other while communicating its own information. For the
situations where the channel gains between the relay and the transmitters are large enough, we
prove that the separation approach is optimal. This may correspond to the physical proximity of the
relay and the transmitters to each other. For the causal case, we have an additional condition on
the gain between the encoders. The phase fading information is not known to the transmitters while
it is known at the receivers.

\section{UC-MARC with non-causal side information} \label{Section-PI-UNCC-MARC}

In this section, we study the PI-UNCC-MARC introduced in Section
\ref{Section_introduction_UNCC-MARC} with the pair of arbitrarily correlated sources $(U,V)$. The
definition of a joint source-channel code for the PI-UNCC-MARC is identical to the one defined for
a PI-MARC in section \ref{definitions} except for the definition of the encoding function $\bx{1}$
which is replaced by

\begin{align}
\left(x_{11},x_{12},\cdots,x_{1n}\right) = \bbx^{n}_{1}(\bU,\bV). \nonumber
\end{align}


%

\begin{theorem}\label{Theorem_Capacity_MARC_noncausal}{\em Reliable Communication over a
PI-UNCC-MARC}: Consider a PI-UNCC-MARC with non-causal cooperation and with power constraints
$P_1,P_2,P_r$ on transmitters and relay, fading amplitudes $g_1,g_2,g_r>0$ between the nodes and
the receiver and $g_{1r},g_{2r}>0$ between the transmitter and the relay. Moreover, assume the gain
conditions
\begin{align}\label{Equation1_Condition_UNCCMARC}
g^{2}_{1r}P_1 & \geq g^{2}_{1}P_1 + g^{2}_{r}P_r, \\
g^{2}_{1r}P_1 + g^{2}_{2r}P_2 & \geq g^{2}_{1}P_1 + g^{2}_{2}P_2 + g^{2}_{r}P_r. \label{Equation2_Condition_UNCCMARC}
\end{align}

A necessary condition for sending a source pair $(\bU,\bV) \sim {\prod_{i}}p(u_{i},v_{i})$, over
such PI-UC-MARC is given by
\begin{align}
H(U \vert V) & \leq \log(1+(g_1^2P_1+g_r^2P_r)/N), \label{separation_1_noncausal_MARC}\\
H(U,V) & \leq \log(1+(g_1^2P_1+g_2^2P_2+g_r^2P_r)/N). \label{separation_3_noncausal_MARC}
\end{align}
\noindent Furthermore, eqs. \eqref{separation_1_noncausal_MARC}-\eqref{separation_3_noncausal_MARC}
also give the sufficient conditions for reliable communications over such PI-UD-MARC with $\leq$
replaced by $<$. \thmend
\end{theorem}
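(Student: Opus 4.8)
The plan is to prove necessity (the converse) and sufficiency (achievability) separately, both resting on the Key Lemma~\ref{Lemma_Gaussian_Optimal}. I note at the outset that, unlike the ordinary PI-MARC of Theorem~\ref{Theorem_Capacity_MARC}, here encoder~$1$ is given $\bV$ non-causally, so this channel is \emph{more} capable and the PI-MARC outer bound cannot simply be quoted. This is also why only two conditions appear: conditioning on $\bU$ no longer removes the transmitter-$1$ signal (since $X_1$ depends on $\bV$ too), so the would-be $H(V\vert U)$ bound degrades to $H(V\vert U)\le\log(1+(g_1^2P_1+g_2^2P_2+g_r^2P_r)/N)$, which \eqref{separation_3_noncausal_MARC} already implies.

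For the converse, I would start from reliability ($\sup_{\bm\theta}P_e^n(\bm\theta)\to0$), fix an arbitrary $\bm\theta$, and apply Fano to get $nH(U\vert V)\le I(\bU;\bY\vert\bV)+n\epsilon_n$ with $\epsilon_n\to0$. Since the transmitter-$2$ codeword is a deterministic function of $\bV$, conditioning on $\bV$ allows subtracting $h_2X_{2i}$ from $Y_i$, replacing $\bY$ by the sequence $\widetilde Y_i=h_1X_{1i}+h_rX_{ri}+Z_i$. Single-letterizing (using that $Z_i$ is independent of $(\bU,X_{1i},X_{ri})$) bounds the mutual information by $\sum_i I(X_{1i},X_{ri};\widetilde Y_i)=\sum_i B_{\bm\theta}(p_{X_{1i},X_{ri}})$, i.e. the Key Lemma's quantity with $m=2$ and coefficients $g_1e^{j\theta_1},g_re^{j\theta_r}$. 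Concavity of mutual information in the input law gives $\frac{1}{n}\sum_i B_{\bm\theta}(p_{X_{1i},X_{ri}})\le B_{\bm\theta}(\bar p)$ for the per-letter mixture $\bar p$, which meets the power constraints \eqref{Power_constraint}; since this holds for every $\bm\theta$, Lemma~\ref{Lemma_Gaussian_Optimal} yields $H(U\vert V)-\epsilon_n\le\min_{\bm\theta}B_{\bm\theta}(\bar p)\le\log(1+(g_1^2P_1+g_r^2P_r)/N)$, and $n\to\infty$ gives \eqref{separation_1_noncausal_MARC}. The identical argument starting from $nH(U,V)\le I(\bU,\bV;\bY)+n\epsilon_n$, single-letterized to $\sum_i B_{\bm\theta}(p_{X_{1i},X_{2i},X_{ri}})$ with $m=3$, gives \eqref{separation_3_noncausal_MARC}.

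For achievability I would use separate source and channel coding. The source code is Slepian--Wolf binning with rate splitting: $m_0=f_0(\bV^n)$ at rate $R_0=H(V)+\epsilon$ and $m_1=f_1(\bU^n)$ at rate $R_1=H(U\vert V)+\epsilon$, so that recovering $(m_0,m_1)$ recovers $(\bU^n,\bV^n)$ with vanishing error, because $R_1>H(U\vert V)$, $R_0>H(V\vert U)$ and $R_0+R_1>H(U,V)$. Because \emph{both} encoders know $\bV^n$, hence $m_0$ --- this is where the non-causal cooperation enters --- the channel task is a PI-UNCC-MARC carrying a common message $m_0$ (available at both transmitters) plus an encoder-$1$ private message $m_1$, with relay help. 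I would use decode-and-forward with regular (fixed block-size) block-Markov encoding and backward decoding as in \cite{Kramer_MARC_Main1:2004},\cite{Dabora:2011}, adapted to carry $m_0$ at both transmitters, with mutually independent i.i.d.\ zero-mean complex Gaussian codebooks of powers $P_1,P_2,P_r$; a single codebook serves all $\bm\theta$ because independent Gaussian inputs make $B_{\bm\theta}(\cdot)$ phase-independent (Lemma~\ref{Lemma_Gaussian_Optimal}). In each block the relay decodes $(m_0,m_1)$ from $Y_{ri}=h_{1r}X_{1i}+h_{2r}X_{2i}+Z_{ir}$, reliably whenever $R_1<\log(1+g_{1r}^2P_1/N)$ and $R_0+R_1<\log(1+(g_{1r}^2P_1+g_{2r}^2P_2)/N)$; by the gain conditions \eqref{Equation1_Condition_UNCCMARC}--\eqref{Equation2_Condition_UNCCMARC} these are implied by $R_1<\log(1+(g_1^2P_1+g_r^2P_r)/N)$ and $R_0+R_1<\log(1+(g_1^2P_1+g_2^2P_2+g_r^2P_r)/N)$, so the relay-decoding constraints are slack. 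Backward decoding at the destination then leaves only those last two constraints (the encoder-$1$ and relay signals combining incoherently for $m_1$, all three for the sum; there is no separate bound on $R_0$ since $m_0$ is a common message), uniformly in $\bm\theta$. Choosing $0<\epsilon<\min\{\log(1+(g_1^2P_1+g_r^2P_r)/N)-H(U\vert V),\ \frac{1}{2}[\log(1+(g_1^2P_1+g_2^2P_2+g_r^2P_r)/N)-H(U,V)]\}$ --- positive by the strict versions of \eqref{separation_1_noncausal_MARC}--\eqref{separation_3_noncausal_MARC} --- makes both channel constraints hold, and a union bound over the source- and channel-decoding errors gives $\sup_{\bm\theta}P_e^n(\bm\theta)\to0$.

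The main obstacle will be the channel-coding step: assembling the decode-and-forward block-Markov code with the extra common message and running the backward-decoding error analysis so that the relay's rate constraints emerge governed by $(g_{1r},g_{2r})$ and the destination's by $(g_1,g_2,g_r)$, and then verifying that \eqref{Equation1_Condition_UNCCMARC}--\eqref{Equation2_Condition_UNCCMARC} are precisely what renders the former slack relative to the latter, simultaneously for every $\bm\theta$. The converse and the Slepian--Wolf step are routine by comparison.
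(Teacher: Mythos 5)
Your proposal is correct and follows essentially the same route as the paper: the converse via Fano's inequality, conditioning on $\bV$ (equivalently $\bX{2}$) to strip the second user's signal, single-letterization, and the Key Lemma's max--min optimality of independent Gaussians; the achievability via Slepian--Wolf source coding followed by regular block-Markov decode-and-forward with backward decoding, with the gain conditions \eqref{Equation1_Condition_UNCCMARC}--\eqref{Equation2_Condition_UNCCMARC} rendering the relay-decoding constraints inactive. Your cosmetic deviations (corner-point rates $R_0=H(V)+\epsilon$, $R_1=H(U\vert V)+\epsilon$ instead of the Slepian--Wolf region for $((\bU,\bV),\bV)$, and a concavity/mixture argument instead of a time-sharing variable) change nothing of substance.
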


The proof of the theorem is divided into two parts: achievability and converse. The achievability
part is obtained by a separate source and channel coding approach. The source coding part involves
Slepian-Wolf coding followed by a channel coding technique which is based on the block Markov
coding. 
The converse and achievability parts of Theorem \ref{Theorem_Capacity_MARC} are discussed and
proved in the sequel.

\subsection{Converse}\label{Section_Converse_MARC}

We derive an outer bound on the capacity region of the PI-UC-MARC (both causal and non-causal)
under gain conditions \eqref{Equation1_Condition_UNCCMARC}-\eqref{Equation2_Condition_UNCCMARC} and
prove the converse part of Theorem \ref{Theorem_Capacity_MARC_noncausal}.

\begin{lemma}\label{Lemma_Converse_MARC}
{\em Converse}: Let $\mathcal{E}_{n}$, and $g^n_{\bm\theta}$ be a sequence in $n$ of encoders and
decoders for the PI-UC-MARC for which $ \sup_{\bm\theta} P_e^n(\bm\theta) \longrightarrow 0$, as $n
\rightarrow \infty$. Then
\begin{align}
H(U\vert V) &\leq \min_{\Bt} I(X_1,X_r;g_1e^{j\theta_1}X_1+g_re^{j\theta_r}X_r+Z), \nonumber \\
H(U,V) &\leq \min_{\Bt} I(X_1,X_2,X_r;g_1e^{j\theta_1}X_1+ g_2e^{j\theta_2}X_2+g_re^{j\theta_r}X_r+Z), \nonumber
\end{align}
\noindent for some {\em joint} distribution $p_{X_1,X_2,X_r}$ such that $\mathbb{E}\vert X_1\vert^2
\leq P_1, \mathbb{E}\vert X_2\vert^2 \leq P_2, \mathbb{E}\vert X_r\vert^2 \leq P_r$, with $Z \sim
\mathcal{C}\mathcal{N}(0,N)$. \thmend
\end{lemma}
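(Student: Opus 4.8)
The plan is to run the standard Fano-type converse argument for a multiterminal source-channel coding problem, specialized to the PI-UC-MARC, and then invoke the single-letterization and the gain conditions to collapse the relay into the other transmitters' signals. First I would fix a sequence of encoders $\mathcal{E}_n$ and decoders $g^n_{\bm\theta}$ with $\sup_{\bm\theta} P_e^n(\bm\theta) \to 0$, fix an arbitrary parameter $\bm\theta$, and apply Fano's inequality to get $H(\bU^n,\bV^n \mid \hat{\bU}^n,\hat{\bV}^n) \leq n\epsilon_n$ with $\epsilon_n \to 0$. For the first bound, I would start from $nH(U\mid V) = H(\bU^n \mid \bV^n)$ (using that the source is i.i.d.), and write $H(\bU^n \mid \bV^n) = I(\bU^n;\bY^n \mid \bV^n) + H(\bU^n \mid \bY^n,\bV^n)$, bounding the second term by $n\epsilon_n$ via Fano and the data-processing inequality. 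The key structural point here is that conditioned on $\bV^n$, the codeword $\bX_2^n$ (a function of $\bV^n$) is fixed and the relay input $X_{ri}$ is a causal function of the relay observations, which themselves depend on $\bX_1^n$ and $\bX_2^n$; so $I(\bU^n;\bY^n \mid \bV^n)$ can be upper bounded by $I(\bX_1^n,\bX_r^n;\bY^n \mid \bV^n)$ and, since $\bX_2^n$ is a deterministic function of $\bV^n$, by $\sum_i I(X_{1i},X_{ri};Y_i \mid \text{past, } \bV^n)$ after the usual chain-rule expansion and dropping the non-negative conditioning reduction of entropy on the noise-plus-interference.

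The relay's causal dependence is the one genuinely delicate point, so I would handle it exactly as in the converse for the relay channel / MARC: expand $I(\bX_1^n,\bX_r^n;\bY^n\mid \bV^n)=\sum_{i=1}^n I(\bX_1^n,\bX_r^n;Y_i\mid Y^{i-1},\bV^n)\le \sum_{i=1}^n h(Y_i\mid \bV^n)-h(Y_i\mid \bX_1^n,\bX_r^n,Y^{i-1},\bV^n)$, and note that given $\bX_1^n,\bX_2^n,\bX_r^n$ the output $Y_i = g_1e^{j\theta_1}X_{1i}+g_2e^{j\theta_2}X_{2i}+g_re^{j\theta_r}X_{ri}+Z_i$ with $Z_i$ independent of everything, so $h(Y_i\mid \bX_1^n,\bX_r^n,Y^{i-1},\bV^n)=h(Z_i)$. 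With $\bX_2^n$ a function of $\bV^n$, the term $g_2e^{j\theta_2}X_{2i}$ is a constant shift under the conditioning and contributes nothing to $h(Y_i\mid\bV^n)$ up to the translation invariance of differential entropy; hence $I(\bX_1^n,\bX_r^n;Y_i\mid Y^{i-1},\bV^n)\le I(X_{1i},X_{ri};g_1e^{j\theta_1}X_{1i}+g_re^{j\theta_r}X_{ri}+Z_i)$. Introducing a time-sharing variable $Q$ uniform on $\{1,\dots,n\}$ and setting $X_1\triangleq X_{1Q}, X_2\triangleq X_{2Q}, X_r\triangleq X_{rQ}$ gives the single-letter bound $H(U\mid V)\le I(X_1,X_r;g_1e^{j\theta_1}X_1+g_re^{j\theta_r}X_r+Z)+\epsilon_n$ for \emph{this} $\bm\theta$; since the encoders are fixed and do not depend on $\bm\theta$, the induced distribution $p_{X_1,X_2,X_r}$ is the same for all $\bm\theta$, so I may take the minimum over $\bm\theta$ and let $n\to\infty$. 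The power constraints $\mathbb{E}|X_j|^2\le P_j$ transfer directly from \eqref{Power_constraint} via the definition of $Q$.

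The second bound is obtained the same way starting from $nH(U,V)=H(\bU^n,\bV^n)=I(\bU^n,\bV^n;\bY^n)+H(\bU^n,\bV^n\mid\bY^n)$, bounding the last term by $n\epsilon_n$ with Fano, replacing $(\bU^n,\bV^n)$ by $(\bX_1^n,\bX_2^n,\bX_r^n)$ via data processing (the codewords are functions of the sources, and $\bX_r^n$ is a causal function of relay observations driven by $\bX_1^n,\bX_2^n$), chain-ruling over time, subtracting $h(Z_i)$, and single-letterizing with $Q$ to get $H(U,V)\le \min_{\bm\theta} I(X_1,X_2,X_r;g_1e^{j\theta_1}X_1+g_2e^{j\theta_2}X_2+g_re^{j\theta_r}X_r+Z)$. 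The main obstacle, as noted, is the bookkeeping for the relay: one must be careful that conditioning on $Y^{i-1}$ (rather than the full $\bY^n$) is what makes the causal relay encoding compatible with the chain rule, and that no circularity arises between $\bX_r^n$ and $\bY^n$. Note this lemma deliberately does not yet invoke the gain conditions \eqref{Equation1_Condition_UNCCMARC}--\eqref{Equation2_Condition_UNCCMARC} or Lemma \ref{Lemma_Gaussian_Optimal}; those enter only in the subsequent step, where the joint distribution is replaced by the worst-case-optimal independent zero-mean Gaussian one and the cut-set style inequalities are matched against \eqref{separation_1_noncausal_MARC}--\eqref{separation_3_noncausal_MARC}.
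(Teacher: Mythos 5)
Your proposal is correct and takes essentially the same route as the paper's proof: Fano's inequality for a fixed $\bm\theta$, data processing from the sources to the channel inputs (using that $\bX{2}$ is a function of $\bV$ and that the relay input is a causal function of its past observations), single-letterization with a time-sharing variable, and an intersection over $\bm\theta$ at the end. The one imprecision is your claim that the induced distribution $p_{X_1,X_2,X_r}$ is the same for all $\bm\theta$ --- it does depend on the relay-link phases $\theta_{1r},\theta_{2r}$ through $X_r$ --- but since the phases minimized over in the lemma's expressions ($\theta_1,\theta_2,\theta_r$) do not affect the induced input distribution, the conclusion stands exactly as in the paper.
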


\begin{proof}

First, fix a PI-UC-MARC with given parameter $\Bt$, a codebook $\mathcal{C}$, and induced {\em
empirical} distribution $p_{\bm\theta}(\bu,\bv,\bx{1},\bx{2},\bx{r},\by)$ by the codebook. Since
for this fixed choice of ${\bm\theta}$, $P^n_e(\bm\theta) \rightarrow 0$, from Fano's inequality,
we have
\begin{align}\label{main_Fano_MARC}
{1 \over n}H(\bU,\bV \vert \bY,\Bt) \leq {1 \over n}{P_e^n(\bm\theta)} \log \|\mathcal{U}^{n}\times\mathcal{V}^{n}\| + {1 \over n}
\triangleq \epsilon_n(\Bt),
\end{align}
and $\epsilon_n(\Bt) \rightarrow 0$, where convergence is uniform in $\bm{\theta}$ by
\eqref{main_error_probability}. Defining $\sup_{\Bt}\epsilon_{n}(\Bt) = \epsilon_{n}$ and following
the similar steps as in \cite[Section 4]{Cover_ElGamal_Salehi:1980}, we have
\begin{align}
H(U\vert V) & = {1 \over n}H(\bU \vert \bV) \nonumber \\
& \stackrel{(\rm a)}{=} {1 \over n}H(\bU \vert \bV ,\bX{2},\Bt) \nonumber \\
& = {1 \over n}I(\bU;\bY \vert \bV, \bX{2}, \Bt) + {1 \over n}H(\bU\vert \bV,\bY,\bX{2},\Bt) \nonumber \\
& \stackrel{(\rm b)}{\leq} {1 \over n}I(\bU ; \bY \vert \bV, \bX{2}, \Bt) + \epsilon_{n} \nonumber\\
&\stackrel{(\rm c)}{\leq} {1 \over n}I(\bX{1} ; \bY \vert \bV, \bX{2}, \Bt) + \epsilon_{n}\nonumber\\
&{\leq} {1 \over n}I(\bX{1},\bX{r} ; {{\bY}} \vert \bV, \bX{2}, \Bt) + \epsilon_{n}, \label{Equation_1st_outer_main_MARC}
\end{align}
\noindent where $(\rm a)$ follows from the fact that $\bX{2}$ is only a function of $\bV$, $(\rm
b)$ follows from \eqref{main_Fano_MARC}, and $(\rm c)$ follows from data processing inequality.
Similarly, it can be shown that
\begin{align}
H(U,V) 
& = {1 \over n} I(\bU,\bV;\bY \vert \Bt) + {1\over n}H(\bU,\bV \vert \bY, \Bt) \nonumber \\
& {\leq} {1 \over n} I(\bX{1} , \bX{2}, \bX{r}; \bY \vert \Bt) + \epsilon_{n}.\label{Equation_3rd_outer_main_MARC}
\end{align}

We now define the region $C_n(\Bt)$ as
\begin{align}
C_n(\bm\theta) = \left\{(R_1, R_2): R_1 < {1 \over n} \right. I(\bX{1}^{n};\bY^{n}& \vert \bV^{n}, \bX{2}^{n}, \Bt) + \epsilon_{n},  \nonumber \\
 \qquad \qquad R_2 <   {1 \over n} \ I(\bX{1}^{n},\bX{2}^{n};&\bY^{n} \vert \Bt) + \epsilon_{n} \biggl. \biggr\}, \nonumber
\end{align}
\noindent for the empirical distribution induced by the $n$th codebook
\begin{align}
&  \prod_{i=1}^{n} p(u_i,v_i)p(\bbx_{1}^n \vert \bu)p(\bbx_{2}^n \vert \bv)
\prod_{i=1}^{n}p_{\bm\theta}(y_i,y_{ri} \vert x_{1i},x_{2i},x_{ri}) \times p(x_{ri}\vert y_{r1},y_{r2},\cdots,y_{r(i-1)}). \nonumber
\end{align}

Hence, the outer bounds \eqref{Equation_1st_outer_main_MARC} and
\eqref{Equation_3rd_outer_main_MARC} can be equivalently described by $C_n(\Bt)$:
\begin{align}
(H(U \vert V),H(U,V)) \in C_n(\Bt). \nonumber
\end{align}

We then note that the outer bound is true for all $\Bt$ and thus can be tightened by taking
intersection over $\Bt$ and letting $n \rightarrow \infty$. 
We now further upper bound $C_{n}(\Bt)$ and then take the limit and intersection.

First, we expand $\bY$ in the right hand side of \eqref{Equation_1st_outer_main_MARC} to upper
bound $H(U \vert V)$ as follows:
\begin{align}
H(U\vert V) & \ \leq {1 \over n} \ I(\bX{1},\bX{r} ; \bY \vert \bV, \bX{2}, \Bt) + \epsilon_{n} \nonumber\\
& ={1 \over n} \ I(\bX{1},\bX{r} ; g_1e^{j\theta_1}\bX{1} + g_2e^{j\theta_2}\bX{2} +
g_re^{j\theta_r}\bX{r} + \bZ \vert \bV, \bX{2})
+ \epsilon_{n} \nonumber\\
& = {1 \over n} \ I(\bX{1},\bX{r} ; g_1e^{j\theta_1}\bX{1} + g_re^{j\theta_r}\bX{r} + \bZ \vert
\bV, \bX{2})
+ \epsilon_{n} \nonumber \\
& = {1 \over n} \left[ h(g_1e^{j\theta_1}\bX{1} + g_re^{j\theta_r}\bX{r} + \bZ \vert \bV,\bX{2}) -  h(\bZ) \right] + \epsilon_{n}\nonumber \\
& \leq {1 \over n} \left[ h(g_1e^{j\theta_1}\bX{1} + g_re^{j\theta_r}\bX{r} + \bZ) - h(\bZ) \right] + \epsilon_{n} \nonumber \\
& = {1 \over n} \ I(\bX{1},\bX{r} ; g_1e^{j\theta_1}\bX{1} + g_re^{j\theta_r}\bX{r} + \bZ)
+ \epsilon_{n} \nonumber \\
& \leq {1 \over n} \sum_{i=1}^{n} I(X_{1i},X_{ri} ; g_1e^{j\theta_1}X_{1i} + g_re^{j\theta_r}X_{ri}+Z_{i})
+ \epsilon_{n} \nonumber \\
& \stackrel{(\rm a)}{=} \ I(X_{1},X_{r} ; g_1e^{j\theta_1}X_{1} + g_re^{j\theta_r}X_{r}+Z \vert W)+ \epsilon_{n} \nonumber\\
& {=} \left[ h(g_1e^{j\theta_1}X_{1} + g_re^{j\theta_r}X_{r}+Z \vert W) - h(Z) \right]+ \epsilon_{n} \nonumber\\
& {\leq} \left[ h(g_1e^{j\theta_1}X_{1} + g_re^{j\theta_r}X_{r}+Z) - h(Z) \right]+ \epsilon_{n} \nonumber\\
& = \ I(X_{1},X_{r} ; g_1e^{j\theta_1}X_{1} + g_re^{j\theta_r}X_{r}+Z)+ \epsilon_{n}, \label{HU_MARC_last}
\end{align}
\noindent where $(\rm a)$ follows by defining new random variables
\begin{align}\label{equation_new_RV_1r}
X_{j} & = X_{jW}, \ j \in \{1,2,r\},\\
Z & = Z_{W}, \\
W & \sim {\rm Uniform}\{1,2,\cdots,n\}. \label{equation_new_RV_1r_last}
\end{align}
\noindent From \eqref{Power_constraint}, the input signals $X_1, X_r$ satisfy the power constraints
\begin{align}\label{single_power_constraint}
\mathbb{E}\vert X_{j} \vert ^{2} = \mathbb{E}\left[{1 \over n} \sum_{i=1}^{n}\|X_{ji}\|^2\right] \leq P_j, \  j = 1,r,
\end{align}
\noindent and $Z \sim \mathcal{C}\mathcal{N}(0,N)$.


Moreover, following similar steps, we have
\begin{align}
H(U,V) & = {1\over n} H(\bU,\bV) \nonumber \\
& = {1 \over n} I(\bU,\bV;\bY \vert \Bt) + {1\over n}H(\bU,\bV \vert \bY, \Bt) \nonumber \\
& \leq {1 \over n} I(\bU , \bV; \bY \vert \Bt) + \epsilon_{n} \nonumber\\
& \leq {1 \over n} I(\bX{1} , \bX{2}; \bY \vert \Bt) + \epsilon_{n} \nonumber\\
& \leq {1 \over n} I(\bX{1} , \bX{2}, \bX{r}; \bY \vert \Bt) + \epsilon_{n} \nonumber\\
& ={1 \over n} \ I(\bX{1},\bX{2},\bX{r} ; g_1e^{j\theta_1}\bX{1} + g_2e^{j\theta_2}\bX{2} + g_re^{j\theta_r}\bX{r} + \bZ)
+ \epsilon_{n} \nonumber\\
& \leq \ {1 \over n} \sum_{i=1}^{n} I(X_{1i},X_{2i},X_{ri} ; g_1e^{j\theta_1}X_{1i} + g_2e^{j\theta_2}X_{2i}
+ g_re^{j\theta_r}X_{ri}+Z_{i}) + \epsilon_{n} \nonumber\\
& \leq \ I(X_{1},X_{2},X_{r} ; g_1e^{j\theta_1}X_{1}
+ g_2e^{j\theta_2}X_{2} + g_re^{j\theta_r}X_{r}+Z) + \epsilon_{n}, \label{HUV_MARC_last}
\end{align}
\noindent where the last step follows with the same RVs as in
\eqref{equation_new_RV_1r}-\eqref{equation_new_RV_1r_last}.

The constraints defined by \eqref{HU_MARC_last} and \eqref{HUV_MARC_last} is an outer bound on
$C_{n}(\Bt)$. But since it applies for a fixed $\Bt$, it is also true for all choices of $\Bt$. By
taking intersection over all values of $\Bt$ and letting $n \rightarrow \infty$, the lemma is
proved.
\end{proof}

To prove the converse part of Theorem \ref{Theorem_Capacity_MARC}, we note by Lemma
\ref{Lemma_Converse_MARC} that each of the bounds of Lemma \ref{Lemma_Converse_MARC} are
simultaneously 
maximized by independent Gaussians. The proof of the converse is complete.

\begin{remark}
Note that to prove the converse part of the Theorem \ref{Theorem_Capacity_MARC_noncausal}, we do
not need the receiver to know the CSI $\Bt$. This is indeed true for other separation theorems of
the paper as well. \thmend
\end{remark}

\subsection{Achievability}\label{Achievability_section_UNCCMARC}


We now establish the same region as achievable for the PI-UNCC-MARC with non-causal cooperation
between the encoders. To derive the achievable region, we perform separate source-channel coding.
The source coding is performed by Slepian-Wolf coding and the channel coding argument is based on
regular block Markov encoding in conjunction with backward decoding \cite{Kramer_MARC_Main1:2004}.
Both source coding and channel coding schemes are explained as follows.

{\em Source Coding}: Recall that the first encoder has non-causal access to the second source
$\bV$. 
From Slepian-Wolf coding \cite{Slepian_Wolf_main:1973}, 
for asymptotically lossless representation of the source $((\bU,\bV),\bV)$, we should have the
rates $(R_1,R_2)$ satisfying
\begin{align}
R_1 & > H(U \vert V), \nonumber\\
R_1 + R_2 & > H(U,V). \nonumber
\end{align}

The source codes are represented by indices $W_1,W_2$ which are then channel coded before being
transmitted.

{\em Channel Coding}:
An achievable region for the discrete memoryless UC-MARC with $2$ users is given based on
the block Markov coding scheme shown in Table \ref{Table_UNCCMARC} combined with backward decoding. 


\begin{table}
\centering
\begin{tabular}{|c|c|c|c|c|}
  \hline
 Encoder  & Block $1$ & Block $2$ & Block $B$ & Block $B+1$\\
  \hline       \hline
    $1$ & $\bx{1}(1,W_{11},W_{21},1)$ & $\bx{1}(W_{11},W_{12},W_{22},W_{21})$ & $\bx{1}(W_{1(B-1)},W_{1B},W_{2B},W_{2(B-1)})$
&  $\bx{1}(W_{1B},1,1,W_{2B})$ \\
  \hline       \hline
    $2$ & $\bx{2}(1,W_{21})$ & $\bx{2}(W_{21},W_{22})$ & $\bx{2}(W_{2(B-1)},W_{2B})$ & $\bx{2}(W_{2B},1)$ \\
 \hline          \hline
$r$ & $\bx{r}(1,1)$ & $\bx{r}(W_{11},W_{21})$ & $\bx{r}(W_{1(B-1)},W_{2(B-1)})$ & $\bx{r}(W_{1B},W_{2B})$ \\
 \hline
\end{tabular}
\caption{Block Markov encoding scheme for UNCC-MARC.}\label{Table_UNCCMARC}
\end{table}



First fix a distribution $p(x_1)p(x_2)p({x_r})$ and construct random codewords
$\bx{1},\bx{2},\bx{r}$ based on the corresponding distributions. The message $W_{i}$ of each
encoder is divided to $B$ blocks $W_{i1},W_{i2},\cdots,W_{iB}$ of $2^{nR_{i}}$ bits each, $i =
1,2$. The codewords are transmitted in $B+1$ blocks based on the block Markov encoding scheme
depicted in Table \ref{Table_UNCCMARC}. Using its non-causal knowledge of the second source,
transmitter $1$ sends the information using the codeword
$\bx{1}(W_{1(t-1)},W_{1t},W_{2t},W_{2{(t-1)}})$, while transmitter $2$ uses codeword
$\bx{2}(W_{2(t-1)},W_{2{t}})$ and the relay sends the codeword $\bx{r}(W_{1(t-1)},W_{2{(t-1)}})$.
We let $B \rightarrow \infty$ to approach the original rates $R_1,R_2$.

At the end of each block $b$, the relay decodes $W_{1b}, W_{2b}$, referred to as forward decoding
\cite{Cover:1979}. Indeed, at the end of the first block, the relay decodes $W_{11},W_{21}$ from
the received signal $\bY_{r}(W_{1b}, W_{2b})$. In the second block, nodes $1$ and $2$ transmit
$\bx{1}(W_{11},W_{12},W_{22},W_{21})$ and $\bx{2}(W_{21},W_{22})$, respectively. The relay decodes
$W_{12},W_{22}$, using the knowledge of $W_{11},W_{21}$, and this is continued until the last
block. Using random coding arguments and forward decoding from the first block, for reliable
decoding of messages $W_{1(b-1)},W_{2(b-1)}$ at the relay after the $b$th block, when $n
\rightarrow \infty$, it is sufficient to have
\begin{align}\label{MARC_achievable_relay_decoding_1}
R_1 & <I(X_1;Y_r \vert X_{2},X_{r},\Bt), \\
R_1+R_2 & < I(X_1,X_2;Y_r \vert X_{r},\Bt).\label{MARC_achievable_relay_decoding_2}
\end{align}


The decoding at the destination, however, is performed based on {\em backward} decoding
\cite{Kramer_MARC_Main2:2004}, \cite{Willems:1982}, i.e., starting from the last block back to the
former ones. As depicted in Table \ref{Table_UNCCMARC}, at the end of block $B+1$, the receiver can
decode $W_{1B},W_{2B}$. Afterwards, by using the knowledge of $W_{1B},W_{2B}$, the receiver goes
one block backwards and decodes $W_{1(B-1)}, W_{2{(B-1)}}$. This process is continued until the
receiver decodes all of the messages. Thus, by applying regular block Markov encoding and backward
decoding as shown in Table \ref{Table_UNCCMARC}, one finds that the destination can decode the
messages reliably if $n \rightarrow \infty$ and
\begin{align}\label{MARC_achievable_destination_decoding_1}
R_1 & < I(X_{1},X_{r};Y \vert X_{2},\Bt), \\
R_1 + R_2 & < I(X_{1},X_{2},X_{r};Y \vert \Bt). \label{MARC_achievable_destination_decoding_2}
\end{align}

The achievability part is complete by first choosing $X_1,X_2$, and $X_r$ as independent Gaussians
and observing that under conditions \eqref{Equation1_Condition_UNCCMARC} and
\eqref{Equation2_Condition_UNCCMARC}, \eqref{MARC_achievable_destination_decoding_1} and
\eqref{MARC_achievable_destination_decoding_2} are tighter bounds than
\eqref{MARC_achievable_relay_decoding_1} and \eqref{MARC_achievable_relay_decoding_2}.

As a result of Theorem \ref{Theorem_Capacity_MARC_noncausal}, in the following corollary, we state
a separation theorem for the PI-UNCC-MAC introduced in Section \ref{Section_introduction_UCC-MAC}
with non-causal cooperation between encoders.

\begin{corollary}\label{Theorem_Capacity_UDMAC_noncausal}
{\em Reliable Communication over a PI-UNCC-MAC}: Necessary conditions for reliable communication of
the source $(U,V)$ over a PI-UNCC-MAC with power constraints $P_1,P_2$ on transmitters, fading
amplitudes $g_1,g_2>0$, and source pair $(\bU,\bV) \sim
{\prod_{i}}p(u_{i},v_{i})$, are given by 
\begin{align}
H(U \vert V) & \leq \log(1+g_1^2P_1/N), \label{separation_1_UDMAC}\\
H(U,V) & \leq \log(1+(g_1^2P_1+g_2^2P_2)/N). \label{separation_2_UDMAC}
\end{align}
\noindent Sufficient conditions for reliable communication are also given by
\eqref{separation_1_UDMAC}-\eqref{separation_2_UDMAC}, with $\leq$ replaced by $<$.\thmend
\end{corollary}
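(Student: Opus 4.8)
The plan is to derive Corollary~\ref{Theorem_Capacity_UDMAC_noncausal} as the relay-free specialization of Theorem~\ref{Theorem_Capacity_MARC_noncausal}. Deleting the relay node turns the PI-UNCC-MARC into a PI-UNCC-MAC: the destination channel reduces to $Y_i=h_1X_{1i}+h_2X_{2i}+Z_i$, the relay-decoding constraints \eqref{MARC_achievable_relay_decoding_1}--\eqref{MARC_achievable_relay_decoding_2} disappear entirely, and the gain conditions \eqref{Equation1_Condition_UNCCMARC}--\eqref{Equation2_Condition_UNCCMARC} --- which were only imposed so that the destination constraints would be the binding ones --- are no longer needed. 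Both the converse and the achievability arguments of Theorem~\ref{Theorem_Capacity_MARC_noncausal} then go through verbatim after erasing every occurrence of $\bX{r}$ and $X_r$.

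For the converse, I would repeat the proof of Lemma~\ref{Lemma_Converse_MARC} with $\bX{r}$ removed: Fano's inequality, the fact that $\bX{2}$ depends on $\bV$ only, the data-processing inequality, dropping the conditioning in the differential-entropy terms, the single-letterization via a uniform time-sharing variable, and the power constraints are all unaffected, and the chain leading to \eqref{HU_MARC_last}--\eqref{HUV_MARC_last} yields $H(U\vert V)\le\min_{\Bt} I(X_1;g_1e^{j\theta_1}X_1+Z)$ and $H(U,V)\le\min_{\Bt} I(X_1,X_2;g_1e^{j\theta_1}X_1+g_2e^{j\theta_2}X_2+Z)$ for some joint $p_{X_1,X_2}$ with $\mathbb{E}|X_1|^2\le P_1$ and $\mathbb{E}|X_2|^2\le P_2$. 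Applying Lemma~\ref{Lemma_Gaussian_Optimal} (with $m=1$ and with $m=2$) shows that these two right-hand sides are \emph{simultaneously} maximized by independent zero-mean Gaussians with ${\rm Var}(X_i)=P_i$, giving exactly the right-hand sides of \eqref{separation_1_UDMAC}--\eqref{separation_2_UDMAC} and hence necessity.

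For achievability I would again use separate source and channel coding. Slepian--Wolf coding \cite{Slepian_Wolf_main:1973} of the pair $((\bU,\bV),\bV)$ is asymptotically lossless as long as $R_1>H(U\vert V)$ and $R_1+R_2>H(U,V)$, with no individual constraint on $R_2$ since the first encoder knows $\bV$. The indices $W_1,W_2$ are then transmitted over the PI-UNCC-MAC; because encoder~$1$ knows $W_2$, this is a cooperative MAC whose destination can decode reliably whenever $R_1<I(X_1;Y\vert X_2,\Bt)$ and $R_1+R_2<I(X_1,X_2;Y\vert \Bt)$ --- exactly \eqref{MARC_achievable_destination_decoding_1}--\eqref{MARC_achievable_destination_decoding_2} with the relay removed and with \eqref{MARC_achievable_relay_decoding_1}--\eqref{MARC_achievable_relay_decoding_2} absent (so the gain conditions play no role). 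Taking $X_1,X_2$ independent Gaussians with variances $P_1,P_2$ and invoking Lemma~\ref{Lemma_Gaussian_Optimal} once more gives $\min_{\Bt}I(X_1;Y\vert X_2)=\log(1+g_1^2P_1/N)$ and $\min_{\Bt}I(X_1,X_2;Y)=\log(1+(g_1^2P_1+g_2^2P_2)/N)$; concatenating the source and channel codes shows reliable transmission whenever \eqref{separation_1_UDMAC}--\eqref{separation_2_UDMAC} hold with strict inequalities.

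The one point that needs a small argument --- and the closest thing to an obstacle here --- is confirming that the channel-coding region for the relay-free cooperative MAC is indeed $\{(R_1,R_2): R_1<I(X_1;Y\vert X_2,\Bt),\ R_1+R_2<I(X_1,X_2;Y\vert\Bt)\}$ with no standalone bound on $R_2$. This follows from the block-Markov/backward-decoding scheme of Section~\ref{Achievability_section_UNCCMARC} by specializing Table~\ref{Table_UNCCMARC} to a single block and deleting the relay row; one should just verify that the error analysis remains uniform in $\Bt$, which it is because Lemma~\ref{Lemma_Gaussian_Optimal} bounds the worst-case mutual informations.
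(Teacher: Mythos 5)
Your argument is correct, but it takes a different route from the paper. The paper proves this corollary in two lines by specializing the \emph{statement} of Theorem~\ref{Theorem_Capacity_MARC_noncausal}: a PI-UNCC-MAC is viewed as a PI-UNCC-MARC with $P_r=0$, and since a silent relay makes the values of $g_{1r},g_{2r}$ irrelevant to the reliable-communication region, one may take them arbitrarily large without loss of generality so that the gain conditions \eqref{Equation1_Condition_UNCCMARC}--\eqref{Equation2_Condition_UNCCMARC} hold trivially; the bounds \eqref{separation_1_noncausal_MARC}--\eqref{separation_3_noncausal_MARC} with $P_r=0$ then give \eqref{separation_1_UDMAC}--\eqref{separation_2_UDMAC}. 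You instead specialize the \emph{proofs}: you rerun the converse of Lemma~\ref{Lemma_Converse_MARC} and the separation-based achievability with every occurrence of $X_r$ deleted, observing that the relay-decoding constraints \eqref{MARC_achievable_relay_decoding_1}--\eqref{MARC_achievable_relay_decoding_2}, and hence the gain conditions, simply never arise. Both are valid; your re-derivation is longer but avoids the (correct yet slightly subtle) WLOG step about $g_{1r},g_{2r}$ that the paper's shortcut relies on, and it makes explicit why no hypothesis on cross gains survives in the MAC statement. One simplification you could make: with the relay gone there is no need to inherit the block-Markov/backward-decoding machinery of Table~\ref{Table_UNCCMARC} at all --- a standard single-block superposition code for the MAC with degraded message sets (encoder $1$ knowing both $W_1,W_2$) directly yields $R_1<I(X_1;Y\vert X_2,\Bt)$ and $R_1+R_2<I(X_1,X_2;Y\vert\Bt)$, and with independent Gaussian inputs these mutual informations are constant in $\Bt$, which settles the uniformity issue you raise more cleanly than appealing to Lemma~\ref{Lemma_Gaussian_Optimal} for the error analysis.
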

\begin{proof}

The PI-UNCC-MAC is equivalent to a PI-UNCC-MARC where the relay has power constraint $P_r = 0$. As
the relay is thus silent, we may assume without loss that $g_{1r}, g_{2r}$ are arbitrarily large,
and the conditions \eqref{Equation1_Condition_UNCCMARC} and \eqref{Equation2_Condition_UNCCMARC}
are trivially satisfied.
\end{proof}

\section{UC-MARC with causal side information} \label{Section-PI-UCC-MARC}

In this section, we state and prove a separation Theorem for another class of UC-MARC in which the
encoders cooperate causally and by means of a wireless phase fading link between transmitters $1$
and $2$. Unlike the noncausal case discussed in Section \ref{Section_introduction_UNCC-MARC},
$X_{1i}$ is a function of the source signal $\bU$ and {its past received signals
$\bY^{(i-1)}_{1}$}. In the sequel, we sate and prove a separation theorem for the causal PI-UD-MARC
under specific gain conditions.

\begin{theorem}\label{Theorem_Capacity_UDMARC_causal}
{\em Reliable communication over a PI-UCC-MARC}: Consider a PI-UCC-MARC with power constraints
$P_1,P_2,P_r$ on transmitters and the relay, fading amplitudes $g_1,g_2,g_r,g_{1r},g_{2r},g_{21}>0$
as shown in Figure \ref{PI_UCCMARC}, and source pair $(\bU,\bV) \sim {\prod_{i}}p(u_{i},v_{i})$.
Furthermore, assume the gain conditions 
\begin{align}
\label{Equation1_Condition_causal_MARC}
g^{2}_{1r}P_1 & \geq g^{2}_{1}P_1 + g^{2}_{r}P_r, \\
g^{2}_{2r}P_1 & \geq g^{2}_{1}P_1 + g^{2}_{2}P_1 + g^{2}_{r}P_r,  \label{Equation2_Condition_causal_MARC} \\
1 + {{g^{2}_{21}}P_2 \over N}  & \geq 2^{-H(U\vert V)}\left(1 + {g^{2}_{1}P_1 + g^{2}_{2}P_2 +
g^{2}_{r}P_r \over N}  \right).\label{Equation_Condition_causal_extra_UDMARC}
\end{align}

Then, a necessary condition of reliable communication of the correlated sources $(\bU,\bV)$ over
such channel with or without knowledge of $\bm\theta$ at the receiver, is given by 
\begin{align}
H(U \vert V) & \leq \log(1+g_1^2P_1 + g_r^2P_r/N), \label{separation_1_UDMARC_causal}\\
H(U,V) & \leq \log(1+(g_1^2P_1+g_2^2P_2+g_r^2P_r)/N). \label{separation_2_UDMARC_causal}
\end{align}
\noindent Conversely, \eqref{separation_1_UDMARC_causal} and \eqref{separation_2_UDMARC_causal}
also describe sufficient conditions for the causal PI-UCC-MARC with $\leq$ replaced by $<$. \thmend
\end{theorem}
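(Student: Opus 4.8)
\textbf{Proof plan for Theorem \ref{Theorem_Capacity_UDMARC_causal}.}
The plan is to split the argument into a converse and an achievability part, exactly as for the non-causal case, and to reuse as much machinery as possible. For the converse, I would first observe that the causal cooperative link from node $2$ to node $1$ cannot help bound $H(U\mid V)$ or $H(U,V)$ from above any more than unrestricted cooperation does: since $X_{1i}$ is a function of $\bU$ and of the past received signals $\bY_1^{(i-1)}$, which in turn are (noisy) functions of $\bX_2=\bbx_2^n(\bV)$, conditioning on $\bV$ and $\bX_2$ still kills the dependence of $\bX_1$ on the cooperative link. Concretely, step $(\rm a)$ in the proof of Lemma \ref{Lemma_Converse_MARC}, namely $H(\bU\mid\bV)=H(\bU\mid\bV,\bX_2,\Bt)$, continues to hold because $\bX_2$ is still a function of $\bV$ alone and $\bU$ is independent of the noise realizations; the remaining chain of inequalities (Fano, data processing, expanding $\bY$, dropping the conditioning in the differential entropy, single-letterization with a time-sharing variable $W$) goes through verbatim. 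Hence Lemma \ref{Lemma_Converse_MARC} applies to the causal PI-UCC-MARC as well, and then invoking Lemma \ref{Lemma_Gaussian_Optimal} (independent zero-mean Gaussians simultaneously maximize both $\min_{\Bt}$ mutual-information bounds) yields precisely \eqref{separation_1_UDMARC_causal}--\eqref{separation_2_UDMARC_causal}. The remark after Lemma \ref{Lemma_Converse_MARC} already notes that the converse needs no CSI at the receiver, which is why the statement allows ``with or without knowledge of $\bm\theta$.''

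For achievability I would again use separate source and channel coding. The source-coding layer is Slepian–Wolf coding of $(U,V)$, producing a rate-$R_1$ description of $U$ given $V$ and a rate-$R_2$ description of $V$, so that reliable reconstruction needs $R_1>H(U\mid V)$ and $R_1+R_2>H(U,V)$. The channel-coding layer is where the causal link is exploited: transmitter $1$ does not know $V$ non-causally, but it can decode $W_{2t}$ from the cooperative link $Y_{1i}=g_{21}e^{j\theta_{21}}X_{2i}+Z_{1i}$ within block $t$ (or, in a block-Markov scheme, decode $W_{2(t-1)}$ at the end of block $t-1$) and thereafter behave like the non-causal encoder. I would therefore set up a block-Markov code in the spirit of Table \ref{Table_UNCCMARC}: node $2$ sends $\bx_2(W_{2(t-1)},W_{2t})$, node $1$ first decodes $W_{2t}$ off the $2\!\to\!1$ link and then transmits $\bx_1(W_{1(t-1)},W_{1t},W_{2t},W_{2(t-1)})$, and the relay sends $\bx_r(W_{1(t-1)},W_{2(t-1)})$; decoding is forward at the relay and backward at the destination, with all codewords independent zero-mean Gaussians. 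This produces the same relay-decoding constraints \eqref{MARC_achievable_relay_decoding_1}--\eqref{MARC_achievable_relay_decoding_2} and destination-decoding constraints \eqref{MARC_achievable_destination_decoding_1}--\eqref{MARC_achievable_destination_decoding_2} as before, plus one new constraint for node $1$ to decode $W_{2t}$ reliably over the point-to-point link, namely $R_2<\log(1+g_{21}^2P_2/N)$.

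It then remains to check that, under the three gain hypotheses \eqref{Equation1_Condition_causal_MARC}--\eqref{Equation_Condition_causal_extra_UDMARC}, none of these auxiliary constraints is active, so that the achievable region collapses to the destination bounds $R_1<\log(1+(g_1^2P_1+g_r^2P_r)/N)$ and $R_1+R_2<\log(1+(g_1^2P_1+g_2^2P_2+g_r^2P_r)/N)$, which matched with the Slepian–Wolf requirements gives exactly \eqref{separation_1_UDMARC_causal}--\eqref{separation_2_UDMARC_causal}. Conditions \eqref{Equation1_Condition_causal_MARC}--\eqref{Equation2_Condition_causal_MARC} handle the relay-decoding constraints (they make $I(X_1;Y_r\mid X_2,X_r)$ and $I(X_1,X_2;Y_r\mid X_r)$ dominate the corresponding destination mutual informations, just as in Theorem \ref{Theorem_Capacity_MARC_noncausal}; note the slight difference that the second condition uses $g_2^2P_1$ rather than $g_2^2P_2$ because node $1$ must decode $W_2$ \emph{before} node $2$'s full power is spread over two sub-blocks). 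The genuinely new ingredient is \eqref{Equation_Condition_causal_extra_UDMARC}: I expect the main obstacle to be verifying that this inequality is exactly what guarantees $R_2<\log(1+g_{21}^2P_2/N)$ is implied by the sum-rate target. Indeed, since $R_1>H(U\mid V)$ is forced from below and $R_1+R_2$ is at most $\log(1+(g_1^2P_1+g_2^2P_2+g_r^2P_r)/N)$, one gets $R_2< \log(1+(g_1^2P_1+g_2^2P_2+g_r^2P_r)/N)-H(U\mid V)$, and requiring the right-hand side not to exceed $\log(1+g_{21}^2P_2/N)$ is precisely \eqref{Equation_Condition_causal_extra_UDMARC} after exponentiating. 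Making this ``$R_1$ bounded away from $H(U\mid V)$'' step rigorous — i.e. choosing the Slepian–Wolf operating point and the block length $B$ so that the slack is uniform — is the one place where I would have to be careful, but it is a standard rate-splitting bookkeeping argument rather than a conceptual difficulty.
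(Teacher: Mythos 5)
Your converse and your final bookkeeping match the paper's proof: the converse is carried over verbatim from the non-causal Theorem \ref{Theorem_Capacity_MARC_noncausal} (the key step $H(\bU\vert\bV)=H(\bU\vert\bV,\bX{2},\Bt)$ survives because $\bX{2}$ is still a function of $\bV$ alone), and your use of \eqref{Equation_Condition_causal_extra_UDMARC} — set $R_1=H(U\vert V)+\epsilon$ so that the cooperative-link constraint $R_2<\log(1+g_{21}^2P_2/N)$ is dominated by the sum-rate constraint — is exactly the paper's argument. The genuine gap is in the channel-coding scheme as you describe it: it violates causality. In block $t$, the message $W_{2t}$ enters the network only through $\bx{2}(W_{2(t-1)},W_{2t})$, so node $1$ can decode $W_{2t}$ from the $2$-to-$1$ link only at the \emph{end} of block $t$; it therefore cannot transmit $\bx{1}(W_{1(t-1)},W_{1t},W_{2t},W_{2(t-1)})$ during that same block, as your scheme requires. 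The paper's causal scheme (Table \ref{Table_UCCMARC1}) instead uses $\bx{1}(W_{1(t-1)},W_{1t},W_{2(t-1)})$, i.e., node $1$ forwards only the one-block-delayed message it has already decoded; your parenthetical hedge (decode $W_{2(t-1)}$ at the end of block $t-1$) points at the right fix, but then $W_{2t}$ must be dropped from node $1$'s block-$t$ codeword.

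This correction is not merely cosmetic for your constraint list. Since node $1$ no longer carries fresh $W_{2t}$, the relay's decoding of $W_{2t}$ at the end of block $t$ must rely on $X_2$ alone, which adds the individual constraint $R_2<I(X_2;Y_r\vert X_1,X_r,\Bt)$ (the paper's \eqref{causal_MARC_achievable_relay_decoding_2}) to the relay conditions — a constraint absent from your list and one that must also be shown inactive under the stated gain hypotheses; this is part of what \eqref{Equation1_Condition_causal_MARC}--\eqref{Equation2_Condition_causal_MARC} are invoked for in the paper, rather than your speculative reading of the $g_2^2P_1$ term in \eqref{Equation2_Condition_causal_MARC} as reflecting power split across sub-blocks. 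Once the encoding table is corrected and this extra relay constraint is checked, the remainder of your plan — independent Gaussian inputs, dropping the relay constraints via the gain conditions, and eliminating the $g_{21}$ constraint via \eqref{Equation_Condition_causal_extra_UDMARC} — coincides with the paper's achievability proof.
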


\begin{proof}

{\em Converse}: The proof of the converse part of Theorem \ref{Theorem_Capacity_UDMARC_causal} is
exactly the same as that of Theorem \ref{Theorem_Capacity_MARC_noncausal}, as all of the steps
remain unchanged in the causal setting.

{\em Achievability}: For the achievability part, similar to Section
\ref{Achievability_section_UNCCMARC}, we use separate source and channel coding. We need to show
that given \eqref{separation_1_UDMARC_causal} and \eqref{separation_2_UDMARC_causal}, we can first
losslessly source code the sources to indices $W_1 \in [1,2^{nR_1}],W_2\in [1,2^{nR_2}]$ and then
send $W_1,W_2$ over the channel with arbitrarily small error probability.

\begin{table}
\centering 
\begin{tabular}{|c|c|c|c|c|}
  \hline
 Encoder & Block $1$ & Block $2$ & Block $B$ & Block $B+1$\\
  \hline       \hline
    $1$ & $\bx{1}(1,W_{11},1)$ & $\bx{1}(W_{11},W_{12},W_{21})$ & $\bx{1}(W_{1(B-1)},W_{1B},W_{2(B-1)})$
&  $\bx{1}(W_{1B},1,W_{2B})$ \\
  \hline       \hline
    $2$ & $\bx{2}(1,W_{21})$ & $\bx{2}(W_{21},W_{22})$ & $\bx{2}(W_{2(B-1)},W_{2B})$ & $\bx{2}(W_{2B},1)$ \\
 \hline          \hline
$r$ &  $\bx{r}(1,1)$ & $\bx{r}(W_{11},W_{21})$ & $\bx{r}(W_{1(B-1)},W_{2(B-1)})$ & $\bx{r}(W_{1B},W_{2B})$ \\
 \hline
\end{tabular}
\caption{Block Markov encoding scheme for UCC-MARC.}\label{Table_UCCMARC1}
\end{table}

{\em Source Coding}: Using Slepian-Wolf coding \cite{Slepian_Wolf_main:1973}, for asymptotically
lossless representation of the source $(\bU,\bV)$, we should have the rates $(R_1,R_2)$ satisfying
\begin{align}\label{Source_coding_causal_MARC1}
R_1 & > H(U \vert V),\\
R_2 & > H(V \vert U),\\
R_1 + R_2 & > H(U,V).\label{Source_coding_causal_MARC3}
\end{align}


{\em Channel Coding}: Similar to that given in Section \ref{Achievability_section_UNCCMARC} for the
noncausal PI-UC-MARC, the channel coding argument is again based on block Markov coding with
backward decoding as shown in Table \ref{Table_UCCMARC1}. 
Since $\bV$ is not perfectly and non-causally known to the first encoder, node $1$ needs to first
decode $W_{2t}$ after block $t$ from its received signal over the link between the encoders. 
In order to guarantee correct decoding at the relay and correct backward decoding at the
destination, using standard random coding arguments, the following conditions should be satisfied:

\begin{align}
\label{causal_MARC_achievable_relay_decoding_1}
R_1 & <I(X_1;Y_r \vert X_{2},X_{r},\Bt), \\
R_2 & <I(X_2;Y_r \vert X_{1},X_{r},\Bt), \label{causal_MARC_achievable_relay_decoding_2} \\
R_1+R_2 & < I(X_1,X_2;Y_r \vert X_{r},\Bt),\label{causal_MARC_achievable_relay_decoding_3}
\end{align}
\noindent for decoding at the relay and
\begin{align}
\label{causal_MARC_achievable_destination_decoding_1}
R_1 & < I(X_{1},X_{r};Y \vert X_{2},\Bt), \\
R_1 + R_2 & < I(X_{1},X_{2},X_{r};Y \vert \Bt). \label{causal_MARC_achievable_destination_decoding_2}
\end{align}
\noindent for decoding at the destination respectively.

Additionally, to reliably decode the second encoder's message at the first encoder (which plays the
role of a relay), we need to satisfy the condition
\begin{align}\label{equation_extra_causal}
R_2 < I(X_2;Y_1 \vert X_1,X_r,\Bt).
\end{align}

Computing these conditions for independent Gaussian inputs and using conditions
\eqref{Equation1_Condition_causal_MARC} and \eqref{Equation2_Condition_causal_MARC}, we find the
following achievable region for channel coding:
\begin{align}\label{Achievability_Gaussian_causal_MARC1}
R_{1} &<  \log(1+(g^{2}_{1}P_{1}+g^{2}_{r}P_{r})/N),\\
R_{2} &< \log(1+g^{2}_{21}P_{2}/N),\label{Achievability_Gaussian_causal_MARC2}\\
R_{1}+R_{2} &< \log(1+(g^{2}_{1}P_{1}+g^{2}_{2}P_{2}+g^{2}_{r}P_{r})/N).
\label{Achievability_Gaussian_causal_MARC3}
\end{align}

In order to make the inner bounds of
\eqref{Achievability_Gaussian_causal_MARC1}-\eqref{Achievability_Gaussian_causal_MARC3} coincide
with the outer bounds \eqref{separation_1_UDMARC_causal}, \eqref{separation_2_UDMARC_causal}, we
need to have
\begin{align}
\log(1+(g^{2}_{1}P_{1}+g^{2}_{2}P_{2}+g^{2}_{r}P_{r})/N) - R_1 < \log(1+g^{2}_{21}P_{2}/N), \nonumber
\end{align}
\noindent so that we can drop \eqref{Achievability_Gaussian_causal_MARC2} from the achievability
constraints. But by choosing $R_1 = H(U \vert V) + \epsilon$, with $\epsilon > 0$ arbitrary,
condition \eqref{Equation_Condition_causal_extra_UDMARC} makes
\eqref{Achievability_Gaussian_causal_MARC2} dominated by
\eqref{Achievability_Gaussian_causal_MARC3} for the Gaussian input distributions. Therefore, since
$\epsilon > 0$ is arbitrary, one can easily verify that given \eqref{separation_1_UDMARC_causal}
and \eqref{separation_2_UDMARC_causal} with $\leq$ replaced by $<$, along with the conditions
\eqref{Equation1_Condition_causal_MARC}-\eqref{Equation_Condition_causal_extra_UDMARC}, source and
channel codes of rates $R_1,R_2$ can be found such that
\eqref{Source_coding_causal_MARC1}-\eqref{Source_coding_causal_MARC3}, and
\eqref{causal_MARC_achievable_relay_decoding_1}-\eqref{equation_extra_causal}
simultaneously hold. 
\end{proof}

\begin{corollary}\label{Theorem_Capacity_UDMAC_causal}
{\em  Reliable communication over a PI-UCC-MAC}: Necessary conditions for reliable communication of
the sources $(U,V)$ over the causal PI-UCC-MAC with power constraints $P_1,P_2$ on transmitters,
fading amplitudes $g_1,g_2>0$, and source pair $(\bU,\bV) \sim
{\prod_{i}}p(u_{i},v_{i})$, is given by 

\begin{align}
H(U \vert V) & \leq \log(1+g_1^2P_1/N), \label{separation_1_UDMAC}\\
H(U,V) & \leq \log(1+(g_1^2P_1+g_2^2P_2)/N), \label{separation_2_UDMAC}
\end{align}
\noindent provided

\begin{align}
\label{Equation_Condition_causal_extra_UDMAC}
1 + {{g^{2}_{21}}P_2 \over N}  & \geq 2^{-H(U\vert V)}\left(1 + {g^{2}_{1}P_1 + g^{2}_{2}P_2 \over N}  \right).
\end{align}

Given \eqref{Equation_Condition_causal_extra_UDMAC}, sufficient conditions for reliable
communications are also given by \eqref{separation_1_UDMAC} and \eqref{separation_2_UDMAC}, with
$\leq$ replaced by $<$. \thmend
\end{corollary}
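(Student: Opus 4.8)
The plan is to deduce this corollary directly from Theorem~\ref{Theorem_Capacity_UDMARC_causal} by specializing it to a silent relay, exactly as in the proof of Corollary~\ref{Theorem_Capacity_UDMAC_noncausal}. First I would observe that a PI-UCC-MAC is the same object as a PI-UCC-MARC in which the relay is given power constraint $P_r=0$: with $P_r=0$ the relay transmits nothing, the received signal at the destination reduces to \eqref{MAC channel-model}, and the causal cooperative link from node $2$ to node $1$, $Y_{1i}=g_{21}e^{j\theta_{21}}X_{2i}+Z_{1i}$, is unchanged. Hence every joint source-channel code for one channel is a joint source-channel code for the other, and the reliability criterion is identical.

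Next I would verify that the three hypotheses of Theorem~\ref{Theorem_Capacity_UDMARC_causal} collapse, when $P_r=0$, to the single hypothesis \eqref{Equation_Condition_causal_extra_UDMAC}. Because the relay is silent, its incoming gains $g_{1r},g_{2r}$ play no role and may be taken arbitrarily large, so \eqref{Equation1_Condition_causal_MARC} and \eqref{Equation2_Condition_causal_MARC} hold trivially; and setting $P_r=0$ in \eqref{Equation_Condition_causal_extra_UDMARC} yields
\begin{align}
1 + {g^{2}_{21}P_2 \over N} \ \geq\ 2^{-H(U\vert V)}\left(1 + {g^{2}_{1}P_1 + g^{2}_{2}P_2 \over N}\right), \nonumber
\end{align}
which is precisely \eqref{Equation_Condition_causal_extra_UDMAC}. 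Likewise, substituting $P_r=0$ into the outer/inner bounds \eqref{separation_1_UDMARC_causal}--\eqref{separation_2_UDMARC_causal} of Theorem~\ref{Theorem_Capacity_UDMARC_causal} gives exactly \eqref{separation_1_UDMAC}--\eqref{separation_2_UDMAC}, so both the necessity and the sufficiency parts transfer verbatim.

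I do not expect a genuine obstacle here; the only thing worth checking is that the achievability mechanism behind Theorem~\ref{Theorem_Capacity_UDMARC_causal} degenerates cleanly when $P_r=0$. In that limit the relay codewords in Table~\ref{Table_UCCMARC1} carry no information, the relay-decoding constraints \eqref{causal_MARC_achievable_relay_decoding_1}--\eqref{causal_MARC_achievable_relay_decoding_3} become vacuous (they may equivalently be read off by letting $g_{1r},g_{2r}\to\infty$), and only the destination-decoding constraints \eqref{causal_MARC_achievable_destination_decoding_1}--\eqref{causal_MARC_achievable_destination_decoding_2} together with the cooperating-encoder constraint \eqref{equation_extra_causal} survive; under \eqref{Equation_Condition_causal_extra_UDMAC} these still make the Slepian--Wolf region defined by \eqref{separation_1_UDMAC}--\eqref{separation_2_UDMAC} achievable, so the inner and outer bounds meet and the corollary follows.
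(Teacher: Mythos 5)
Your proposal is correct and follows essentially the same route as the paper: the PI-UCC-MAC is viewed as a PI-UCC-MARC with $P_r=0$, the silent relay lets you take $g_{1r},g_{2r}$ arbitrarily large so that \eqref{Equation1_Condition_causal_MARC}--\eqref{Equation2_Condition_causal_MARC} hold trivially, and \eqref{Equation_Condition_causal_extra_UDMARC} reduces to \eqref{Equation_Condition_causal_extra_UDMAC} while the bounds \eqref{separation_1_UDMARC_causal}--\eqref{separation_2_UDMARC_causal} reduce to \eqref{separation_1_UDMAC}--\eqref{separation_2_UDMAC}. Your additional check that the achievability scheme degenerates cleanly when the relay is silent is a harmless elaboration of what the paper leaves implicit.
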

\begin{proof}

The argument is similar to the proof of the Corollary \ref{Theorem_Capacity_UDMAC_noncausal}. The
PI-UCC-MAC is equivalent to a PI-UCC-MARC where the relay has power constraint $P_r = 0$. As the
relay is thus silent, we may assume without loss that $g_{1r}, g_{2r}$ are arbitrarily large. The
conditions \eqref{Equation1_Condition_causal_MARC}-\eqref{Equation2_Condition_causal_MARC} of
Theorem \ref{Theorem_Capacity_UDMARC_causal} with \eqref{Equation_Condition_causal_extra_UDMARC}
being changed to \eqref{Equation_Condition_causal_extra_UDMAC} are then trivially satisfied.
\end{proof}

%

\section{Interference Channel} \label{Section-PI-IC}

We now study the communication of the arbitrarily correlated sources $(U,V)$ over a
phase-asynchronous interference channel introduced in Section \ref{Section_introduction_IC}. The
definition of the joint source-channel code and power constraints are similar to the ones given in
Section \ref{definitions}. However, since there are two decoders in this setup, we define two
indexed decoding functions $g^n_{\bm\theta,1}$ and $g^n_{\bm\theta,2}$ and two error probability
functions
\begin{align}
P_{e1}^n(\Bt_{1}) & = P\{\bU^n \neq \hat{\bU}^n \vert {\bm\theta_{1}}\}
= \sum_{{\bu}^n  \in {\mathcal{U}^n }}  p({\bu}^n) \times P\{\hat{\bU}^n \neq \bu^n \ \vert {\bu}^n , {\bm\theta}\},\nonumber\\
P_{e2}^n(\Bt_{2}) & = P\{\bV^n \neq \hat{\bV}^n \vert {\bm\theta_{2}}\}
= \sum_{({\bv}^n) \in {\mathcal{V}^n}}  p({\bv}^n) \times P\{\hat{\bV}^n \neq \bv^n \ \vert {\bv}^n , {\bm\theta_{2}}\}.
\nonumber
\end{align}
\noindent for each of the corresponding receivers.

Consequently, reliable communications for the PI-IC is defined as:

\begin{definition} We say the source $\{U_i,V_i\}_{i=1}^{n}$ of i.i.d. discrete random variables
with joint probability mass function $p(u,v)$ {\em can be reliably sent} over the PI-IC, if there
exists a sequence of encoding functions
${\mathcal{E}_{n}}\triangleq\{\bbx^{n}_{1}(\bU^{n}),\bbx^{n}_{2}(\bV^{n})\}$
and decoders $g^n_{\bm\theta_{1}},g^n_{\bm\theta_{2}}$ 
such that the output sequence $\bU^{n}$ can be reliably estimated at the first receiver and
$\bV^{n}$ can be reliably estimated at the second receiver over all parameters
$\bm\theta_{1},\Bt_{2}$ respectively. That is,

\begin{align}\label{main_error_probability_IC1}
\left[\sup_{\bm\theta} P_{e1}^n(\bm\theta_{1})\right] \longrightarrow 0, \ \ {\rm as}   \  \ n \rightarrow \infty,\\
\left[\sup_{\bm\theta} P_{e2}^n(\bm\theta_{2})\right] \longrightarrow 0, \ \ {\rm as}   \  \ n \rightarrow \infty.
\label{main_error_probability_IC2}
\end{align}
\thmend
\end{definition}

\begin{theorem}\label{Theorem_Capacity_IC}
{\em Reliable Communications over a PI-IC}: A necessary condition of reliably sending arbitrarily
correlated sources $(U,V)$ over a PI-IC with power constraints $P_1,P_2$ on transmitters, fading
amplitudes $g_{11},g_{12},g_{21},g_{22}>0$, and source pair $(\bU,\bV) \sim
{\prod_{i}}p(u_{i},v_{i})$, with the strong interference condition
\begin{align} \label{Equation_Condition_IC1}
g_{11} &\geq g_{12} \\
g_{22} &\geq g_{21} \label{Equation_Condition_IC2}
\end{align}
\noindent with or without knowledge of $\bm\theta$ at the receiver, is given by
\begin{align}
H(U \vert V) & \leq \log(1+g^{2}_{11}P_1/N), \label{separation_1_IC}\\
H(V \vert U) & \leq \log(1+g^{2}_{22}P_2/N), \label{separation_2_IC}\\
H(U,V) & \leq \min \left\{ \log(1+(g_{11}^{2}P_1+g_{21}^2P_2)/N), \log(1+(g^{2}_{12}P_1+g_{22}^2P_2)/N) \right\}. \label{separation_3_IC}
\end{align}
\noindent The same conditions \eqref{separation_1_IC}-\eqref{separation_3_IC} with $\leq$ replaced
by $<$ describe the achievability region. \thmend
\end{theorem}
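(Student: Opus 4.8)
The plan is to prove Theorem \ref{Theorem_Capacity_IC} in two parts, converse and achievability, following the template already established for the MARC in Section \ref{Section-PI-UNCC-MARC}. For the converse, I would fix $\Bt_1,\Bt_2$, a codebook, and the induced empirical distribution, and apply Fano's inequality at each receiver separately: since $P^n_{e1}(\Bt_1)\to 0$ and $P^n_{e2}(\Bt_2)\to 0$ uniformly, we get $\frac1n H(\bU\mid\bY_1,\Bt_1)\le\epsilon_n$ and $\frac1n H(\bV\mid\bY_2,\Bt_2)\le\epsilon_n$. The single-user bounds \eqref{separation_1_IC} and \eqref{separation_2_IC} follow by the now-standard chain: $H(U\mid V)=\frac1n H(\bU\mid\bV,\bX{2})\le\frac1n I(\bX{1};\bY_1\mid\bV,\bX{2})+\epsilon_n\le\frac1n\sum_i I(X_{1i};g_{11}e^{j\theta_{11}}X_{1i}+Z_{1i})+\epsilon_n$, then single-letterize with a uniform time-sharing variable, drop the conditioning that only decreases entropy, intersect over $\Bt_1$, and invoke Lemma \ref{Lemma_Gaussian_Optimal} (with $m=1$) to conclude $H(U\mid V)\le\log(1+g_{11}^2P_1/N)$; symmetrically for $H(V\mid U)$.

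The sum-rate bound \eqref{separation_3_IC} is where the strong interference conditions \eqref{Equation_Condition_IC1}--\eqref{Equation_Condition_IC2} enter, and this is the part I would treat most carefully. The idea is the classical strong-interference argument (Sato, Han--Kobayashi, Costa--El Gamal): under $g_{11}\ge g_{12}$ and $g_{22}\ge g_{21}$, whatever $\bU$ can be decoded at receiver $1$ can also be decoded at receiver $2$ (the ``less noisy'' ordering of the interference links, made precise for the Gaussian/complex case), so receiver $2$ can be required to decode both sources; and symmetrically receiver $1$ can decode both. Thus $H(U,V)\le\frac1n I(\bU,\bV;\bY_2\mid\Bt_2)+\epsilon_n\le\frac1n I(\bX{1},\bX{2};\bY_2\mid\Bt_2)+\epsilon_n$, and expanding $\bY_2=g_{12}e^{j\theta_{12}}\bX{1}+g_{22}e^{j\theta_{22}}\bX{2}+\bZ_2$, single-letterizing, intersecting over $\Bt_2$ and applying Lemma \ref{Lemma_Gaussian_Optimal} with $m=2$ gives $H(U,V)\le\log(1+(g_{12}^2P_1+g_{22}^2P_2)/N)$. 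The symmetric statement at receiver $1$ gives the other term $\log(1+(g_{11}^2P_1+g_{21}^2P_2)/N)$, and taking the minimum yields \eqref{separation_3_IC}. The delicate point is justifying the ``if receiver $1$ can decode then receiver $2$ can'' step rigorously in the compound phase-fading setting: one wants an inequality of the form $I(\bX{1};\bY_1\mid\bV,\bX{2},\Bt_1)\le I(\bX{1};\bY_2\mid\bV,\bX{2},\Bt_2)$ for all $\Bt$, which holds because conditioned on $\bX{2}$ the channel to receiver $2$ is a degraded (scaled-up) version of the channel to receiver $1$ when $g_{22}\ge g_{21}$ — here I would spell out the stochastic-degradation / Gaussian-noise-splitting argument, noting the phases are irrelevant since they can be absorbed by the receiver who knows them.

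For achievability, the approach mirrors Corollary \ref{Theorem_Capacity_UDMAC_noncausal} and the MARC achievability: perform \emph{separate} source and channel coding. Source-code $(\bU,\bV)$ via Slepian--Wolf to indices $(W_1,W_2)$ with $R_1>H(U\mid V)$, $R_2>H(V\mid U)$, $R_1+R_2>H(U,V)$, then channel-code over the PI-IC. For the channel code I would use the strong-interference capacity region of the Gaussian interference channel with independent Gaussian inputs $X_1\sim\mathcal{CN}(0,P_1)$, $X_2\sim\mathcal{CN}(0,P_2)$, where both receivers jointly decode both messages; by Lemma \ref{Lemma_Gaussian_Optimal} the worst-case (over phases) rates achievable are exactly $R_1<\log(1+g_{11}^2P_1/N)$, $R_2<\log(1+g_{22}^2P_2/N)$, and $R_1+R_2<\min\{\log(1+(g_{11}^2P_1+g_{21}^2P_2)/N),\log(1+(g_{12}^2P_1+g_{22}^2P_2)/N)\}$ — the two MAC-type sum constraints coming from the two receivers. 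Matching these against the Slepian--Wolf requirements shows that whenever \eqref{separation_1_IC}--\eqref{separation_3_IC} hold with strict inequality one can choose $(R_1,R_2)$ in the intersection, completing the proof. I expect the main obstacle to be the converse sum-rate argument: making the ``receiver $2$ can also decode $\bU$'' reduction airtight under adversarial unknown phases — the rest is routine given Lemma \ref{Lemma_Gaussian_Optimal} and the Fano/single-letterization machinery already used verbatim in Sections \ref{Section_Converse_MARC} and \ref{Achievability_section_UNCCMARC}.
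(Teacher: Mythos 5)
Your overall plan coincides with the paper's: the paper also proves this theorem by invoking Sato's strong-interference argument --- if each receiver can decode its own source, then both receivers can decode both sources --- so that the PI-IC collapses to the intersection of two PI-MACs, after which both the converse and the achievability (separate Slepian--Wolf plus channel coding with both receivers decoding both indices) follow from Theorem \ref{Theorem_Capacity_MAC}. The only presentational difference is that you re-derive the Fano/single-letterization machinery instead of citing the PI-MAC result.

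There is, however, a genuine flaw in exactly the step you flag as delicate. The inequality you need, $I(\bX{1};\bY_{1}\vert \bV,\bX{2},\Bt)\le I(\bX{1};\bY_{2}\vert \bV,\bX{2},\Bt)$, compares the two receivers' views of $X_1$ once $X_2$ is accounted for; it is governed by $g_{11}$ versus $g_{12}$, not by $g_{22}$ versus $g_{21}$ as you assert, and it holds precisely when $g_{12}\ge g_{11}$, i.e., when receiver $2$'s observation of $X_1$ is the \emph{stronger} one (then receiver $1$'s channel is a degraded version of receiver $2$'s after noise splitting, the phases being absorbable at the receivers). Under the ordering you work with, $g_{11}\ge g_{12}$, receiver $2$'s channel for $X_1$ is the degraded one, the inequality runs the other way, and ``receiver $2$ can also decode $\bU$'' cannot be concluded; symmetrically, ``receiver $1$ can also decode $\bV$'' hinges on $g_{21}\ge g_{22}$, not on $g_{22}\ge g_{21}$. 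The same directional issue silently enters your achievability: forcing both receivers to decode both messages also imposes $R_1<\log(1+g_{12}^2P_1/N)$ (receiver $2$'s MAC) and $R_2<\log(1+g_{21}^2P_2/N)$ (receiver $1$'s MAC), and these extra constraints are redundant --- so that the inner bound matches \eqref{separation_1_IC}--\eqref{separation_3_IC} --- exactly when the cross gains dominate the direct gains, which is the standard meaning of strong interference and is the regime the Sato-type argument (and the paper's analogous IRC condition \eqref{Equation_Condition_IRC3}, $\alpha=g_{11}/g_{12}<1$) actually requires. To make your proof airtight you must attach the correct gain comparison to each decoding claim and run the degradation argument in the regime where each cross link is at least as strong as the corresponding direct link; as written, the key inequality is asserted under hypotheses for which it is false, and both the sum-rate converse and the matching of the inner and outer bounds break down.
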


\subsection{Converse}

In this section, we derive an outer bound on the capacity region and prove the converse part of
Theorem \ref{Theorem_Capacity_IC} for the interference channel.
\begin{lemma}\label{Converse_lemma_IC}
{\em Converse}: Let $\{\bbx^{n}_{1}(\bu^{n}),\bbx^{n}_{2}(\bv^{n})\}$, and
$g^n_{\bm\theta1},g^n_{\bm\theta 2}$ be sequences in $n$ of codebooks and decoders for the PI-IC
for which \eqref{main_error_probability_IC1} and \eqref{main_error_probability_IC2} hold.
Then we have
\begin{align}
H(U\vert V)& \leq \min_{\Bt_1} I(X_1;e^{j\theta_{11}}X_1+Z), \label{Converse_Lemma_1st_eq_IC}\\
H(V\vert U)& \leq \min_{\Bt_2} I(X_2;e^{j\theta_{22}}X_2+Z), \label{Converse_Lemma_2nd_eq_IC} \\
H(U,V) \leq \min \biggl\{ \min_{\Bt_1} I(X_1,X_2;e^{j\theta_{11}}X_1+g_{21}&e^{j\theta_{21}}X_2+Z) ,
\min_{\Bt_2} I(X_1,X_2;g_{12}e^{j\theta_{12}}X_1+e^{j\theta_{22}}X_2+Z) \biggr\}
\label{Converse_Lemma_3rd_eq_IC}
\end{align}
\noindent for some {\em joint} distribution $p_{X_1,X_2}$ such that $\mathbb{E}\vert X_1\vert^2
\leq P_1, \mathbb{E}\vert X_2\vert^2 \leq P_2$. \thmend
\end{lemma}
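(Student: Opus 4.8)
The plan is to mirror the converse argument already carried out for the PI-UC-MARC in Lemma \ref{Lemma_Converse_MARC}, adapting it to the two-receiver interference structure and then invoking the strong interference conditions \eqref{Equation_Condition_IC1}--\eqref{Equation_Condition_IC2}. Fix a PI-IC with a given parameter pair $(\Bt_1,\Bt_2)$, a codebook, and the induced empirical distribution on $(\bu,\bv,\bx{1},\bx{2},\by_1,\by_2)$. Since $\sup_{\bm\theta}P_{e1}^n(\bm\theta_1)\to 0$, Fano's inequality applied at receiver $1$ gives $\tfrac1n H(\bU\mid\bY_1,\Bt_1)\le\epsilon_n$ with $\epsilon_n\to0$ uniformly in $\Bt_1$, and similarly at receiver $2$ for $\bV$ given $\bY_2$. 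For the first bound I would write $H(U\mid V)=\tfrac1n H(\bU\mid\bV)=\tfrac1n H(\bU\mid\bV,\bX{2})$ (since $\bX{2}$ is a function of $\bV$), expand as $\tfrac1n I(\bU;\bY_1\mid\bV,\bX{2})+\tfrac1n H(\bU\mid\bY_1,\bV,\bX{2})$, drop the second term via Fano, apply the data processing inequality to replace $\bU$ by $\bX{1}$, and then expand $\bY_1 = g_{11}e^{j\theta_{11}}\bX{1}+g_{21}e^{j\theta_{21}}\bX{2}+\bZ_1$; conditioning on $\bX{2}$ removes its contribution, leaving $\tfrac1n[h(g_{11}e^{j\theta_{11}}\bX{1}+\bZ_1\mid\bV,\bX{2})-h(\bZ_1)]+\epsilon_n$, which I upper bound by dropping the conditioning, single-letterize using a uniform time-sharing variable $W$ exactly as in \eqref{equation_new_RV_1r}--\eqref{equation_new_RV_1r_last}, and finally drop $W$ to reach $I(X_1;g_{11}e^{j\theta_{11}}X_1+Z)$. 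Absorbing the constant $g_{11}$ into the effective noise level (or equivalently rescaling) yields the stated form $\min_{\Bt_1}I(X_1;e^{j\theta_{11}}X_1+Z)$ after intersecting over $\Bt_1$ and letting $n\to\infty$. The symmetric computation at receiver $2$ gives \eqref{Converse_Lemma_2nd_eq_IC}.

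For the joint bound \eqref{Converse_Lemma_3rd_eq_IC} I would produce two separate upper bounds on $H(U,V)$, one from each receiver, and take their minimum. Starting from receiver $1$: $H(U,V)=\tfrac1n H(\bU,\bV)=\tfrac1n I(\bU,\bV;\bY_1\mid\Bt_1)+\tfrac1n H(\bU,\bV\mid\bY_1,\Bt_1)$. Here the second term is where the strong interference condition is needed — $\tfrac1n H(\bU\mid\bY_1,\Bt_1)\le\epsilon_n$ is immediate from Fano, but to also control $\tfrac1n H(\bV\mid\bU,\bY_1,\Bt_1)$ I would argue that under $g_{11}\ge g_{12}$ (and $g_{22}\ge g_{21}$ for the other receiver), receiver $1$, after correctly decoding $\bU$ and subtracting $\bX{1}$, sees $\bV$ through a channel at least as good as receiver $2$'s channel to $\bV$ (whose residual entropy is $o(n)$), so $\tfrac1n H(\bU,\bV\mid\bY_1,\Bt_1)\to0$ as well. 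Granting this, $H(U,V)\le\tfrac1n I(\bX{1},\bX{2};\bY_1\mid\Bt_1)+\epsilon_n$, and expanding $\bY_1$, single-letterizing, and removing the time-sharing variable as before gives $\min_{\Bt_1}I(X_1,X_2;g_{11}e^{j\theta_{11}}X_1+g_{21}e^{j\theta_{21}}X_2+Z)$; the analogous chain at receiver $2$ gives the other term. Rescaling by $g_{11}$ and $g_{22}$ respectively to normalize the direct-link coefficients to $1$, we obtain \eqref{Converse_Lemma_3rd_eq_IC}. As in Lemma \ref{Lemma_Converse_MARC}, all mutual informations are evaluated for a common joint distribution $p_{X_1,X_2}$ satisfying the per-codeword power constraints $\mathbb{E}|X_1|^2\le P_1$, $\mathbb{E}|X_2|^2\le P_2$, which carry over from \eqref{Power_constraint} after the single-letterization.

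The main obstacle I anticipate is precisely the "decode the other user's message" step needed to bound $\tfrac1n H(\bU,\bV\mid\bY_1,\Bt_1)$ at the stronger receiver — this is the place where \eqref{Equation_Condition_IC1}--\eqref{Equation_Condition_IC2} enter, and it is the standard but delicate ingredient in strong-interference converses (à la Sato / Han--Kobayashi / Costa--El Gamal). The cleanest route is: given a reliable code for the IC, show that under the strong interference conditions the first receiver can also reliably decode $\bV$ (because, after cancelling its own intended codeword, its observation of $\bX{2}$ is a stochastically degraded version of $\bY_2$), hence $\tfrac1n H(\bV\mid\bU,\bY_1,\Bt_1)\le\epsilon_n'$ with $\epsilon_n'\to0$; symmetrically for receiver $2$. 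One subtlety is that the compound/phase-incoherent setting requires this degradedness to hold for every $\Bt_1$ (respectively $\Bt_2$), but since the phase only multiplies the signal and the noise is circularly symmetric, the comparison of effective SNRs reduces exactly to $g_{11}\ge g_{12}$ and $g_{22}\ge g_{21}$ and is phase-independent, so the argument goes through uniformly. Everything else — the Fano steps, the data processing inequality, the conditioning removals, and the single-letterization — is routine and parallels Lemma \ref{Lemma_Converse_MARC} verbatim.
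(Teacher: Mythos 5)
Your overall architecture is the same as the paper's: the paper proves this lemma by a short reduction---under strong interference each receiver can decode \emph{both} source sequences (citing Sato), so the PI-IC converse becomes the intersection of two PI-MAC converses and the bounds follow from Theorem \ref{Theorem_Capacity_MAC}---while you write out the Fano / data-processing / single-letterization chain explicitly in the style of Lemma \ref{Lemma_Converse_MARC} instead of invoking Theorem \ref{Theorem_Capacity_MAC}. That part is routine and fine, and you correctly identify that the only non-routine ingredient is bounding ${1\over n}H(\bV\mid \bU,\bY_{1},\Bt)$ at receiver $1$ (and symmetrically ${1\over n}H(\bU\mid\bV,\bY_{2},\Bt)$ at receiver $2$).

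The gap is in your justification of that ingredient: the gain comparison is stated in the wrong direction, and with it the degradedness claim fails. For receiver $1$ (which knows $\bX{1}$ after decoding $\bU$) to decode $\bV$, the relevant comparison is between its interference-free look at $X_2$, with gain $g_{21}$, and receiver $2$'s look at $X_2$, with gain $g_{22}$: one needs $g_{21}\geq g_{22}$, so that receiver $1$ can add $g_{12}e^{j\theta_{12}}\bX{1}$ plus appropriately scaled independent noise to emulate $\bY_{2}$ and run receiver $2$'s decoder; symmetrically $g_{12}\geq g_{11}$ is needed for receiver $2$ to decode $\bU$. You assert instead that the comparison ``reduces exactly to $g_{11}\geq g_{12}$ and $g_{22}\geq g_{21}$'' and that receiver $1$'s post-cancellation observation is ``a stochastically degraded version of $\bY_{2}$''---degraded in the wrong direction: a degraded observation cannot inherit receiver $2$'s decodability. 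Taken literally, the step collapses: with $g_{12}=g_{21}=0$ your inequalities (and \eqref{Equation_Condition_IC1}--\eqref{Equation_Condition_IC2} as printed) hold, yet the channel is two parallel point-to-point links, neither receiver can decode the other source, and the sum bound \eqref{Converse_Lemma_3rd_eq_IC} is simply false for nearly independent sources operated near the individual capacities. What the Sato-type argument (and the matching of \eqref{separation_1_IC}--\eqref{separation_3_IC} with the intersection of the two PI-MAC regions, and the analogous IRC conditions \eqref{Equation_Condition_IRC3}, \eqref{Equation_Condition_IRC5}) actually requires is the conventional strong-interference direction, cross gains at least as large as the corresponding direct gains; the printed \eqref{Equation_Condition_IC1}--\eqref{Equation_Condition_IC2} appear to have the inequalities reversed, a typo you inherited, but your proof must use the correct direction for the key step to go through, and it should also verify the emulation argument uniformly over $(\Bt_1,\Bt_2)$ (which is immediate once the direction is fixed, since phases do not affect the SNR comparison). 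A further small point: rescaling does not turn $I(X_1;g_{11}e^{j\theta_{11}}X_1+Z)$ into $I(X_1;e^{j\theta_{11}}X_1+Z)$; the absent $g_{11},g_{22}$ factors are an artifact of the lemma's statement (compare \eqref{separation_1_IC}), so keep the gains explicit rather than claiming they can be absorbed.
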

\begin{proof}

First, fix a PI-IC with given parameters $(\theta_1,\theta_2)$, a codebook $\mathcal{C}$, and
induced {\em empirical} distribution $p_{\bm\theta}(\bu,\bv,\bx{1},\bx{2},\by_{1},\by_{2})$. Then,
we note that by using the strong interference conditions of \eqref{Equation_Condition_IC1} and
\eqref{Equation_Condition_IC2}, one can argue that both of the receivers can decode both of the
sequences $\bU,\bV$ provided there are encoders and decoders such that each receiver can reliably
decode its {\em own} source sequence (see \cite{Sato:1981} for details). Thus, $\bU,\bV$ can both
be decoded from both $\bY_{1},\bY_{2}$. Thus, we have the intersection of two PI-MACs and the
result follows from Theorem \ref{Theorem_Capacity_MAC}.
\end{proof}

\subsection{Achievability}

The achievability part of Theorem \ref{Theorem_Capacity_IC} can be obtained by noting that if we
make joint source-channel codes such that both receivers are able to decode both messages, then we
will have an achievable region. Thus, the interference channel will be divided to two PI-MACs and
the achievable region will be again the intersection of the achievable regions of the two PI-MACs
as given in Theorem \ref{Theorem_Capacity_MAC}.

\section{Interference Relay Channel (IRC)} \label{Section-PI-IRC}

In this section, we prove a separation theorem for the PI-IRC {introduced in Section
\ref{Section_introduction_IRC}} under some non-trivial constraints on the channel gains which can
be considered as a {\em strong interference} situation for the IRC. The definitions of reliable
communication and joint source-channel codes for the PI-IRC are similar to those for the PI-IC. We
first state the separation theorem and consequently give the proofs of the converse and
achievability parts.

\begin{theorem}\label{Theorem_Capacity_IRC} {\em  Reliable communication over a PI-IRC}: Consider a PI-IRC with power constraints $P_1,P_2,P_r$ on
transmitters, fading amplitudes $g_{11},$ $g_{21},$ $g_{12},$ $g_{22} \geq 0$ between the
transmitters and the receivers, $g_{r1},g_{r2} \geq 0$ between the relay and the receivers, and
$g_{1r},g_{2r}>0$ between the transmitters and the relay. Assume also that the network operates
under the gain conditions
\begin{align}
&{g_{11} \over g_{12}}  = {g_{r1} \over g_{r2}} = \alpha < 1, \label{Equation_Condition_IRC3}\\
&g^{2}_{11}P_1 + g^{2}_{r1}P_r \leq g^{2}_{1r}P_1, \label{Equation_Condition_IRC1}\\
&g^{2}_{22}P_2 + g^{2}_{r2}P_r \leq g^{2}_{2r}P_2, \label{Equation_Condition_IRC2}\\
& (1-\alpha^2) \ g^{2}_{12}P_{1} \leq \alpha^{2} g^{2}_{r2} P_{r},  \label{Equation_Condition_IRC4}\\
&{\left({1-\alpha^2}\right){g_{12}^{2}}P_1 \over P_2} + {\left({1-\alpha^2}\right){g_{r2}^{2}}P_r \over P_2} + {g_{22}^{2}} \leq {g_{21}^{2}}.
\label{Equation_Condition_IRC5}
\end{align}

Then, a necessary condition for reliably sending a source pair $(\bU,\bV) \sim
{\prod_{i}}p(u_{i},v_{i})$, over such PI-IRC is given by

\begin{align}
H(U \vert V) & \leq \log(1+(g_{11}^2P_1+g_{r1}^2P_r)/N), \label{separation_1_IRC}\\
H(V \vert U) & \leq \log(1+(g_{22}^2P_2+g_{r2}^2P_r)/N), \\
H(U,V) & \leq \log(1+(g_{12}^2P_1+g_{22}^2P_2+g_{r2}^2P_r)/N). \label{separation_3_IRC}
\end{align}
\noindent Moreover, a sufficient condition for reliable communication is also given by
\eqref{separation_1_IRC}-\eqref{separation_3_IRC}, with $\leq$ replaced by $<$, when $\Bt$ is known
at the receivers. \thmend
\end{theorem}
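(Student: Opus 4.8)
The plan is to follow the same two-part (converse / achievability) template already established for the PI-MARC and PI-UC-MARC results, since the PI-IRC is structurally a combination of an interference channel and a relay. For the converse, I would first argue — exactly as in the proof of Lemma \ref{Converse_lemma_IC} for the PI-IC — that the gain condition \eqref{Equation_Condition_IRC3} together with \eqref{Equation_Condition_IRC4}--\eqref{Equation_Condition_IRC5} constitutes a strong-interference regime in which receiver $1$ (the weaker one, since $\alpha<1$) can, without loss, be required to decode \emph{both} source sequences $\bU$ and $\bV$. Once both receivers decode both sources, the channel reduces to an intersection of two phase-incoherent multiple-access relay channels: one formed by $(X_1,X_2,X_r)\to Y_1$ and one by $(X_1,X_2,X_r)\to Y_2$. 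I would then invoke Fano's inequality and the single-letterization chain of Lemma \ref{Lemma_Converse_MARC} (the steps there are insensitive to whether cooperation is present) applied to receiver $1$ for the $H(U\mid V)$ bound and to receiver $2$ for the $H(V\mid U)$ bound, and to whichever receiver gives the tighter sum bound for $H(U,V)$. Conditions \eqref{Equation_Condition_IRC1}--\eqref{Equation_Condition_IRC2} ensure the relay-to-receiver links are the binding constraint (so the relay's contribution $g_{r1}^2P_r$, $g_{r2}^2P_r$ appears additively as in \eqref{separation_1_MARC}--\eqref{separation_3_MARC}). Finally, I would apply Lemma \ref{Lemma_Gaussian_Optimal} to replace each $\min_{\Bt}I(\cdot;\cdot)$ by its value at independent zero-mean Gaussian inputs with variances $P_1,P_2,P_r$, and check that under \eqref{Equation_Condition_IRC3}--\eqref{Equation_Condition_IRC5} the receiver-$1$ sum bound $\log(1+(g_{12}^2P_1+g_{22}^2P_2+g_{r2}^2P_r)/N)$ is the smaller of the two, yielding exactly \eqref{separation_1_IRC}--\eqref{separation_3_IRC}.

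For the achievability direction, I would use separate source and channel coding as in Sections \ref{Achievability_section_UNCCMARC}. The source-coding layer is Slepian-Wolf coding of $(U,V)$ into rates $(R_1,R_2)$ with $R_1>H(U\mid V)$, $R_2>H(V\mid U)$, $R_1+R_2>H(U,V)$. For the channel-coding layer, I would use decode-and-forward at the relay with regular block Markov encoding and backward decoding, with independent Gaussian codebooks for $X_1,X_2,X_r$ and superposition of the relay codeword. The relay-decoding constraints read $R_1<I(X_1;Y_r\mid X_2,X_r,\Bt)$, $R_2<I(X_2;Y_r\mid X_1,X_r,\Bt)$, $R_1+R_2<I(X_1,X_2;Y_r\mid X_r,\Bt)$; conditions \eqref{Equation_Condition_IRC1}--\eqref{Equation_Condition_IRC2} make these dominated by the corresponding destination constraints. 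At the destinations, since we want both receivers to decode both messages (turning the IRC into two MARCs whose regions we intersect), the binding constraints are those of the weaker receiver $1$: $R_1<I(X_1,X_r;Y_1\mid X_2,\Bt)$, $R_2<I(X_2,X_r;Y_1\mid X_1,\Bt)$, $R_1+R_2<I(X_1,X_2,X_r;Y_1\mid\Bt)$, and analogously for receiver $2$. Evaluating for independent Gaussians and using \eqref{Equation_Condition_IRC3}--\eqref{Equation_Condition_IRC5} so that receiver $2$'s constraints are implied by receiver $1$'s, the achievable channel region becomes $R_1<\log(1+(g_{11}^2P_1+g_{r1}^2P_r)/N)$, $R_2<\log(1+(g_{22}^2P_2+g_{r2}^2P_r)/N)$, $R_1+R_2<\log(1+(g_{12}^2P_1+g_{22}^2P_2+g_{r2}^2P_r)/N)$; combined with the Slepian-Wolf layer this gives \eqref{separation_1_IRC}--\eqref{separation_3_IRC} with strict inequalities.

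The main obstacle I anticipate is the bookkeeping for the strong-interference reduction: unlike the plain PI-IC, where Sato's argument (\cite{Sato:1981}) directly shows each receiver can decode the other's message, here one must verify that with the relay present and with the \emph{alignment} condition $g_{11}/g_{12}=g_{r1}/g_{r2}=\alpha$, receiver $1$'s observation is (after scaling by $1/\alpha$) a degraded version of receiver $2$'s observation plus independent noise — this is precisely what \eqref{Equation_Condition_IRC4} is engineering, by ensuring there is enough "room" in the relay power to absorb the interference-power mismatch. Getting the noise-variance/power accounting in \eqref{Equation_Condition_IRC4}--\eqref{Equation_Condition_IRC5} to line up so that receiver $1$ is uniformly the bottleneck, for \emph{all} phase realizations $\Bt$ and not just on average, is the delicate step; everything else is a routine repetition of the MARC converse (Lemma \ref{Lemma_Converse_MARC}), the Gaussian-optimality lemma (Lemma \ref{Lemma_Gaussian_Optimal}), and the block-Markov/backward-decoding achievability already carried out in Sections \ref{Section-PI-UNCC-MARC}--\ref{Section-PI-UCC-MARC}.
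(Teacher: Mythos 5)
Your achievability outline is essentially the paper's: Slepian--Wolf coding followed by regular block Markov encoding with decode-and-forward at the relay and backward decoding, with \emph{both} destinations decoding both messages, and conditions \eqref{Equation_Condition_IRC1}--\eqref{Equation_Condition_IRC5} used to drop the relay-decoding and redundant destination constraints. (One slip: the binding constraints are not all ``those of receiver $1$'' --- under \eqref{Equation_Condition_IRC3} and \eqref{Equation_Condition_IRC5} the $R_1$ constraint binds at receiver $1$ while the $R_2$ and sum constraints bind at receiver $2$; the expression you call the ``receiver-$1$ sum bound'' is in fact receiver $2$'s. The final region you write is nevertheless the correct one.)

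The converse, however, has a genuine gap. You propose a Sato-type strong-interference reduction: argue that each receiver can be required to decode both $\bU$ and $\bV$, reduce the PI-IRC to an intersection of two PI-MARCs, and invoke Lemma \ref{Lemma_Converse_MARC}. The paper never makes (and does not need) the claim that either receiver can decode both sources; with a relay present the usual ``decode your own message, subtract it, and the residual channel is stronger'' argument does not go through, because $X_r$ is a causal function of both messages and cannot be stripped off, and because for general phase realizations $Y_{1}$ is not a degraded version of $Y_{2}$. You flag exactly this as the ``delicate step'' but leave it unresolved, and you attribute the degradedness to \eqref{Equation_Condition_IRC4}, which is in fact used only on the achievability side. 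The paper's resolution is different: since reliability must hold uniformly over $\Bt$, it suffices to derive necessary conditions on the restricted phase set ${\bf\Phi}_{c} = \{\theta_{11}=\theta_{12},\ \theta_{r1}=\theta_{r2}\}$, where by \eqref{Equation_Condition_IRC3} alone the useful part of $\bY_{1}$ given $\bX{2}$ is $\alpha$ times that of $\bY_{2}$. Writing $H(U,V)=H(V)+H(U\vert V)$, applying Fano at receiver $2$ for $\bV$ and at receiver $1$ for $\bU$, splitting the noise as $\bZ_{1} \sim \alpha\bZ_{2}+\tilde{\bZ}_{1}$ (preserving marginals), and using $I(X;Y)=I(X;\alpha Y)$, the two mutual-information terms merge into ${1 \over n}I(\bX{1},\bX{2},\bX{r};\bY_{2})$, which after single-letterization with a time-sharing variable and Lemma \ref{Lemma_Gaussian_Optimal} yields \eqref{separation_3_IRC}; the individual bounds follow as in Lemma \ref{Lemma_Converse_MARC}. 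Without either this restricted-phase degradation argument or a rigorous ``both-decode-both'' claim valid for relay-aided channels, your converse does not establish \eqref{separation_3_IRC}.
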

The proof of Theorem \ref{Theorem_Capacity_IRC} is discussed in the two following subsections.
First, the converse is proved and afterwards, we prove the achievability part of Theorem
\ref{Theorem_Capacity_IRC}.

\subsection{Converse}

\begin{lemma}\label{Converse_lemma_IRC}
{\em PI-IRC Converse}: Let $\mathcal{E}_{n}$ be a sequence in $n$ of encoders, and
$g^n_{1\Bt},g^n_{2\Bt}$ be sequences in $n$ of decoders for the PI-IRC for which $ \sup_{\bm\theta}
P_{e1}^n(\bm\theta), P_{e2}^n(\bm\theta)\longrightarrow 0$, as $n \rightarrow \infty$, then we have
\begin{align}
H(U\vert V) &\leq \min_{\Bt \in {\bf\Phi}_{c}} I(X_1,X_r;g_{11}e^{j\theta_{11}}X_1+g_{r1}e^{j\theta_{r1}}X_r+Z), \label{Lemma_Converse_1st_eq_IRC}\\
H(V\vert U) &\leq \min_{\Bt \in {\bf\Phi}_{c}} I(X_2,X_r;g_{22}e^{j\theta_{22}}X_2+g_{r2}e^{j\theta_{r2}}X_r+Z), \label{Lemma_Converse_2nd_eq_IRC} \\
H(U,V) &\leq \min_{\Bt \in {\bf\Phi}_{c}} I(X_1,X_2,X_r;g_{12}e^{j\theta_{12}}X_1+g_{22}e^{j\theta_{22}}X_2+g_{r2}
e^{j\theta_{r2}}X_{r}+Z), \label{Lemma_Convers_3rd_eq_IRC}
\end{align}
\noindent for some {\em joint} distribution $p_{X_1,X_2,X_r}$ such that $\mathbb{E}\vert X_1\vert^2
\leq P_1, \mathbb{E}\vert X_2\vert^2 \leq P_2, \mathbb{E}\vert X_r\vert^2 \leq P_r$, where
${\bf\Phi}_{c} \triangleq \{\Bt: \theta_{11}=\theta_{12}, \theta_{r1}=\theta_{r2}\}$. \thmend
\end{lemma}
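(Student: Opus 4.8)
The plan is to mimic the converse argument used for the PI-UC-MARC in Lemma \ref{Lemma_Converse_MARC}, but first to convert the interference-relay channel into an ordinary (compound) MARC by exploiting the strong-interference gain conditions \eqref{Equation_Condition_IRC3}--\eqref{Equation_Condition_IRC5}. Specifically, I would first argue that under these conditions, whenever receiver~$1$ can reliably decode $\bU$ and receiver~$2$ can reliably decode $\bV$, in fact \emph{both} receivers can decode \emph{both} sources, so that the PI-IRC effectively reduces to two ``phase-incoherent MARCs'' sharing the same inputs $(X_1,X_2,X_r)$ and the same relay. The reduction at receiver~$2$ (the ``strong'' receiver, $\alpha<1$) is immediate since its observed signal carries $X_1$ at least as strongly as receiver~$1$'s does, up to the common scaling $\alpha$ on the two direct/relay-to-that-receiver links from \eqref{Equation_Condition_IRC3}; the extra inequalities \eqref{Equation_Condition_IRC4}--\eqref{Equation_Condition_IRC5} are what make receiver~$1$ able to subtract off a (suitably scaled and noise-degraded) copy of receiver~$2$'s signal and thereby also decode $\bV$. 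This is the step I expect to be the main obstacle: making the ``both receivers decode both messages'' claim rigorous in the \emph{compound} (phase-unknown-at-transmitter) setting, and in particular explaining why it suffices to restrict attention to the subset $\mathbf{\Phi}_c = \{\Bt:\theta_{11}=\theta_{12},\ \theta_{r1}=\theta_{r2}\}$ of phase vectors — the point being that for the converse we only need the bound to hold for \emph{some} adversarial phase choice, and the degradedness-style argument that lets receiver~$1$ emulate receiver~$2$ requires the phases on the $X_1$-to-receiver-1, $X_1$-to-receiver-2 links (and similarly the relay links) to be aligned.

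Granting that reduction, I would then run the Fano-plus-single-letterization argument verbatim from the proof of Lemma \ref{Lemma_Converse_MARC}. Fix $\Bt\in\mathbf{\Phi}_c$, a codebook, and the induced empirical distribution. From $P_{e1}^n,P_{e2}^n\to 0$ and the decode-both-messages property, Fano's inequality gives ${1\over n}H(\bU,\bV\mid \bY_1,\Bt)\le\epsilon_n$ and ${1\over n}H(\bU,\bV\mid\bY_2,\Bt)\le\epsilon_n$ with $\epsilon_n\to0$ uniformly in $\Bt$. Then, exactly as in \eqref{Equation_1st_outer_main_MARC}, using that $\bX_2$ is a function of $\bV$ and the data-processing inequality, $H(U\mid V)\le {1\over n}I(\bX_1,\bX_r;\bY_1\mid\bV,\bX_2,\Bt)+\epsilon_n$; expanding $\bY_1$, dropping the known $g_{21}e^{j\theta_{21}}\bX_2$ term, dropping the conditioning on $(\bV,\bX_2)$ inside the entropy, and single-letterizing with a uniform time-sharing variable $W$ (and then dropping the conditioning on $W$, since a zero-mean Gaussian makes the bound largest — this is where Lemma \ref{Lemma_Gaussian_Optimal} is invoked) yields $H(U\mid V)\le I(X_1,X_r;g_{11}e^{j\theta_{11}}X_1+g_{r1}e^{j\theta_{r1}}X_r+Z)$. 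The symmetric manipulation at receiver~$2$ gives the bound on $H(V\mid U)$ in \eqref{Lemma_Converse_2nd_eq_IRC}.

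For the sum-rate bound \eqref{Lemma_Convers_3rd_eq_IRC} I would again follow \eqref{HUV_MARC_last}: $H(U,V)={1\over n}I(\bU,\bV;\bY_2\mid\Bt)+{1\over n}H(\bU,\bV\mid\bY_2,\Bt)\le {1\over n}I(\bX_1,\bX_2,\bX_r;\bY_2\mid\Bt)+\epsilon_n$, then expand $\bY_2 = g_{12}e^{j\theta_{12}}\bX_1+g_{22}e^{j\theta_{22}}\bX_2+g_{r2}e^{j\theta_{r2}}\bX_r+\bZ_2$, single-letterize, and invoke Lemma \ref{Lemma_Gaussian_Optimal} to drop the time-sharing conditioning. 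Since each of these three inequalities holds for the fixed $\Bt\in\mathbf{\Phi}_c$ and each is valid for every such $\Bt$, I take the intersection over $\Bt\in\mathbf{\Phi}_c$ and let $n\to\infty$; the power constraints on the single-letter $X_1,X_2,X_r$ follow from \eqref{Power_constraint} exactly as in \eqref{single_power_constraint}, and the joint distribution $p_{X_1,X_2,X_r}$ is the one induced by the limiting empirical distribution, which completes the proof. The one genuinely non-routine ingredient remains the ``both receivers decode both messages, and it is enough to look at $\mathbf{\Phi}_c$'' reduction; everything after it is a transcription of the PI-MARC converse together with one application of Lemma \ref{Lemma_Gaussian_Optimal}.
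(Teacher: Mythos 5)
There is a genuine gap, and it sits exactly where you flagged it: the ``both receivers decode both sources'' reduction is assumed rather than proven, and it is not how the paper's converse works. The paper never establishes (or needs) any decodability beyond the problem's own requirement that receiver $1$ decode $\bU$ and receiver $2$ decode $\bV$. Its first two bounds are obtained exactly as you describe, but the sum bound is obtained purely information-theoretically: write $H(U,V)=H(V)+H(U\vert V)$, apply Fano at receiver $2$ for $\bV$ and at receiver $1$ for $\bU$ (given $\bV,\bX{2}$), obtaining ${1\over n}I(\bX{2};\bY_{2}\vert\Bt)+{1\over n}I(\bX{1},\bX{r};\bY_{1}\vert\bV,\bX{2},\Bt)$ up to $\epsilon$-terms, and then convert the receiver-$1$ term into a receiver-$2$ term: for $\Bt\in{\bf\Phi}_{c}$ (so $\theta_{11}=\theta_{12}$, $\theta_{r1}=\theta_{r2}$) and using only \eqref{Equation_Condition_IRC3} ($g_{11}=\alpha g_{12}$, $g_{r1}=\alpha g_{r2}$, $\alpha<1$), one splits the noise $\bZ_{1}$ in distribution as $\alpha\bZ_{2}+\tilde{\bZ}_{1}$, absorbs the scaling by $\alpha$, and discards the extra noise to conclude $I(\bX{1},\bX{r};\bY_{1}\vert\bV,\bX{2},\Bt)\leq I(\bX{1},\bX{r};\bY_{2}\vert\bX{2},\Bt)$, whence $H(U,V)\lesssim{1\over n}I(\bX{1},\bX{2},\bX{r};\bY_{2}\vert\Bt)$. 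In particular, conditions \eqref{Equation_Condition_IRC1}, \eqref{Equation_Condition_IRC2}, \eqref{Equation_Condition_IRC4} and \eqref{Equation_Condition_IRC5} play no role in the converse at all -- they are used only in the achievability part to drop redundant channel-coding constraints -- so your attribution of \eqref{Equation_Condition_IRC4}--\eqref{Equation_Condition_IRC5} to a converse-side ``receiver $1$ decodes $\bV$'' step is off the mark.

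Two further points. First, the part of your reduction that is actually needed for the sum bound is only that receiver $2$ can also decode $\bU$; this can be made rigorous by an emulation argument (receiver $2$ decodes $\bV$, hence knows $\bX{2}$, and forms $\alpha\bY_{2}-\alpha g_{22}e^{j\theta_{22}}\bX{2}+g_{21}e^{j\theta_{21}}\bX{2}+\tilde{\bZ}$ with $\tilde{\bZ}\sim\mathcal{CN}(0,(1-\alpha^{2})N{\bf I})$, which is a statistical replica of $\bY_{1}$ jointly with the inputs, on which it runs receiver $1$'s decoder), but this is precisely the phase-aligned scaling-plus-noise argument that constitutes the heart of the proof, so deferring it with ``granting that reduction'' leaves the lemma unproved; the claim that receiver $1$ can decode $\bV$ should simply be dropped, since it is neither needed nor supported by the stated gain conditions. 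Second, a minor misattribution: in the single-letterization the conditioning on the time-sharing variable $W$ is removed via $h(\cdot\vert W)\leq h(\cdot)$, not via Lemma \ref{Lemma_Gaussian_Optimal}; the Gaussian lemma is invoked only afterwards, when the bounds of the lemma are maximized to obtain the explicit expressions in Theorem \ref{Theorem_Capacity_IRC}.
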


\begin{proof}

First, fix a PI-IRC with given parameter $\Bt \in {\bf\Phi}_{c}$, a codebook $\mathcal{C}$, and
induced {\em empirical} distribution $p_{\bm\theta}(\bu,\bv,\bx{1},\bx{2},\bx{r},\by_{1},\by_{2})$.
Since for this fixed choice of ${\bm\theta}$, $P^n_{e1}(\bm\theta), P^n_{e2}(\bm\theta) \rightarrow
0$, from Fano's inequality, we have
\begin{align}
{1 \over n}H(\bU \vert \bY_{1},\Bt) & \leq {1 \over n}{P_{e1}^n(\bm\theta)} \log \|\mathcal{U}^{n}\| + {1 \over n}
 \triangleq \epsilon_{1n}(\Bt),\nonumber\\
{1 \over n}H(\bV \vert \bY_{2},\Bt) & \leq {1 \over n}{P_{e2}^n(\bm\theta)} \log \|\mathcal{V}^{n}\| + {1 \over n}
\triangleq \epsilon_{2n}(\Bt),\nonumber
\end{align}
and $\epsilon_{1n}(\Bt),\epsilon_{2n}(\Bt) \rightarrow 0$, where convergence is uniform in
$\bm{\theta}$. Defining $\sup_{\Bt}\epsilon_{in}(\Bt) = \epsilon_{in}, i=1,2$ and following similar
steps as those resulting in \eqref{Equation_1st_outer_main_MARC}, we have

\begin{align}\label{Equation_1st_outer_main_IRC}
H(U \vert V) & \leq {1 \over n}I(\bX{1} , \bX{r}; \bY_{1} \vert \bV, \bX{2}, \Bt) + \epsilon_{1n}, \\
H(V \vert U) & \leq {1 \over n}I(\bX{2},\bX{r} ; \bY_{2} \vert \bU, \bX{1}, \Bt) + \epsilon_{2n}. \label{Equation_2nd_outer_main_IRC}
\end{align}


\noindent As in Section \ref{Section_Converse_MARC}, we can upper bound
\eqref{Equation_1st_outer_main_IRC}, \eqref{Equation_2nd_outer_main_IRC} and derive
\eqref{Lemma_Converse_1st_eq_IRC} and \eqref{Lemma_Converse_2nd_eq_IRC}. Next, to derive
\eqref{Lemma_Convers_3rd_eq_IRC}, we define a random vector $\tilde{\bZ}_{1} \sim
\mathcal{C}\mathcal{N}(0,(1-\alpha)N\bf{I})$ with $\bf{I}$ the $n \times n$ identity matrix, and
bound $H(U,V)$ as follows:
\begin{align}
H(U,V) & = {1 \over n} H(\bU,\bV) \nonumber\\
& = {1 \over n} H(\bV) + {1 \over n} H(\bU \vert \bV) \nonumber\\
& = {1 \over n} H(\bV) + {1 \over n} H(\bU \vert \bV, \bX{2}) \nonumber\\
& = {1 \over n} I(\bV ; \bY_{2} \vert \Bt) + {1 \over n} I(\bU;\bY_{1} \vert \bV, \bX{2},\Bt) + {1 \over n} H(\bV \vert \bY_{2},\Bt)
+ {1 \over n}H(\bU \vert \bV,\bX{2},\bY_{1},\Bt) \nonumber \\
& \leq {1 \over n} I(\bX{2} ; \bY_{2}\vert \Bt) + {1 \over n} I(\bX{1};\bY_{1} \vert \bV, \bX{2},\Bt) + \epsilon_{1n} + \epsilon_{2n} \nonumber\\
& \leq {1 \over n} I(\bX{2} ; \bY_{2}\vert \Bt) + {1 \over n} I(\bX{1}, \bX{r};\bY_{1} \vert \bV, \bX{2},\Bt) + \epsilon_{1n} + \epsilon_{2n}
\nonumber\\
& \leq {1 \over n} I(\bX{2} ; \bY_{2}\vert \Bt) + {1 \over n} \left[h(\bY_{1} \vert \bX{2},\Bt) - h(\bZ_{1})\right] +
\epsilon_{1n} + \epsilon_{2n} \nonumber \\
& = {1 \over n} I(\bX{2} ; \bY_{2}\vert \Bt) + {1 \over n} I(\bX{1}, \bX{r};\bY_{1} \vert \bX{2}, \Bt) + \epsilon_{1n} +
\epsilon_{2n} \nonumber \\
& \leq {1 \over n} I(\bX{2} ; \bY_{2}\vert \Bt) + {1 \over n} I(\bX{1}, \bX{r};g_{11}e^{j\theta_{11}}\bX{1}+
g_{r1}e^{j\theta_{r1}}\bX{r} + \bZ_{1} \vert \bX{2}) + \epsilon_{1n} + \epsilon_{2n} \label{IRC_outer_bound_before_noise_div}\\
& \stackrel{(\rm a)}{=} {1 \over n} I(\bX{2} ; \bY_{2}) + {1 \over n} I(\bX{1}, \bX{r};g_{11}e^{j\theta_{11}}\bX{1}+
g_{r1}e^{j\theta_{r1}}\bX{r} + \alpha\bZ_{1} + {{\tilde{\bZ}}_{1}}\vert \bX{2}) + \epsilon_{1n} + \epsilon_{2n} \nonumber\\
& \stackrel{(\rm b)}{=} {1 \over n} I(\bX{2} ; \bY_{2}) + {1 \over n} I(\bX{1}, \bX{r};g_{11}e^{j\theta_{11}}\bX{1}+
g_{r1}e^{j\theta_{r1}}\bX{r} + \alpha\bZ_{2} + {{\tilde{\bZ}}_{1}}\vert \bX{2}) + \epsilon_{1n} + \epsilon_{2n}\label{IRC_outer_bound_replace_noises}\\
& \stackrel{(\rm c)}{=} {1 \over n} I(\bX{2} ; \bY_{2}) + {1 \over n} I(\bX{1}, \bX{r};\alpha g_{12}e^{j\theta_{12}}\bX{1}+
\alpha g_{r2}e^{j\theta_{r2}}\bX{r} + \alpha\bZ_{2} + {{\tilde{\bZ}}_{1}}\vert \bX{2}) + \epsilon_{1n} + \epsilon_{2n}\nonumber\\
& \stackrel{(\rm d)}{=} {1 \over n} I(\bX{2} ; \bY_{2}) + {1 \over n} I(\bX{1}, \bX{r};\alpha g_{12}e^{j\theta_{12}}\bX{1}+
\alpha g_{r2}e^{j\theta_{r2}}\bX{r} + \alpha\bZ_{2} \vert \bX{2}) + \epsilon_{1n} + \epsilon_{2n}\nonumber\\
& = {1 \over n} I(\bX{2} ; \bY_{2}) + {1 \over n} I(\bX{1}, \bX{r};
\alpha \bY_{2} \vert \bX{2}) + \epsilon_{1n} + \epsilon_{2n} \nonumber\\
& \stackrel{(\rm e)}{=} {1 \over n} I(\bX{2} ; \bY_{2}) + {1 \over n} I(\bX{1}, \bX{r};
\bY_{2} \vert \bX{2}) + \epsilon_{1n} + \epsilon_{2n} \nonumber\\
& = {1 \over n} I(\bX{1},\bX{2},\bX{r} ; \bY_{2}) + \epsilon_{1n} + \epsilon_{2n},
\end{align}
\noindent where $(\rm a), (\rm b)$ follows from the fact that by preserving the noise marginal
distribution, the mutual information does not change. The noise term $\bZ_{1}$ in
\eqref{IRC_outer_bound_before_noise_div} is thus divided into two independent terms
$\alpha{\bZ_{1}} + \tilde{\bZ}_{1}$, and then $\bZ_{1}$ is replaced by $\bZ_{2}$ to obtain
\eqref{IRC_outer_bound_replace_noises}. Also, $(\rm c)$ follows from
\eqref{Equation_Condition_IRC3} and the fact that in ${\bf\Phi}_{c}$, $\theta_{11} = \theta_{12}$
and $\theta_{r1} = \theta_{r2}$, $(\rm d)$ follows since reducing the noise may only increase the
mutual information, and $(\rm e)$ follows from the fact that linear transformation
does not change mutual information: $I(X;Y) = I(X;\alpha Y)$. 

We can now further upper bound $H(U,V)$ by the fact that the upper bound $\mathcal{I}(\Bt)$ is true
for all values of $\Bt \in {\bf\Phi}_{c}$:
\begin{align}
H(U,V) & \leq \min_{\Bt \in {\bf\Phi}_{c}} \left\{{1 \over n} I(\bX{1},\bX{2},\bX{r} ; \bY_{2})\right\} + \epsilon_{1n} + \epsilon_{2n} \nonumber\\
& \leq \min_{\Bt \in {\bf\Phi}_{c}} \left\{ {1 \over n} \sum_{i=1}^{n} I(X_{1i},X_{2i},X_{ri};Y_{2i}) \right\} + \epsilon_{1n} + \epsilon_{2n}
\nonumber\\
& \stackrel{(\rm a)}{=} \min_{\Bt \in {\bf\Phi}_{c}} I(X_{1},X_{2},X_{r};Y_{2} \vert W) + \epsilon_{1n} + \epsilon_{2n} \nonumber\\
& {=} \min_{\Bt \in {\bf\Phi}_{c}} \left[ h(Y_{2}\vert W) - h(Z) \right] + \epsilon_{1n} + \epsilon_{2n} \nonumber\\
& {\leq} \min_{\Bt \in {\bf\Phi}_{c}} \left[ h(Y_{2}) - h(Z) \right] + \epsilon_{1n} + \epsilon_{2n} \nonumber\\
& {=} \min_{\Bt \in {\bf\Phi}_{c}}  I(X_{1},X_{2},X_{r};Y_{2}) + \epsilon_{1n} +
\epsilon_{2n} \nonumber
\end{align}

\noindent where $(\rm a)$ follows by defining the time-sharing RV $W$ and RVs $X_{1}, X_{2}, X_{r}$
as in \eqref{equation_new_RV_1r}-\eqref{equation_new_RV_1r_last} with the power constraints similar
to \eqref{single_power_constraint}. By letting $n \rightarrow \infty$, the proof of the lemma is
complete.
\end{proof}

Using the key lemma, we maximize the upper bounds of Lemma \ref{Converse_lemma_IRC} with the
independent Gaussians and the proof of the converse part is complete.

\subsection{Achievability}

The achievability part is again proved by separate source-channel coding:

\noindent {\em Source Coding}: Using Slepian-Wolf coding, the source $(\bU,\bV)$ is source coded,
requiring the rates $(R_1,R_2)$ to satisfy
\eqref{Source_coding_causal_MARC1}-\eqref{Source_coding_causal_MARC3}.

{\em Channel Coding}: Using the block Markov coding shown in Table \ref{Table_UCIRC} in conjunction
with backward decoding at the receivers (note: both receivers decode all messages) and forward
decoding at the relay, we derive the following necessary conditions to find reliable channel codes
for a compound IRC with $2$ transmitters and a relay $r$:
\begin{align}\label{eq1_Lemma_achievability_IRC}
R_{1} &< \min \left\{ I(X_1;Y_r \vert X_{2},X_{r},\Bt), I(X_{1},X_{r};Y_{1} \vert X_{2},\Bt), I(X_{1},X_{r};Y_{2} \vert X_{2}, \Bt) \right\},\\
R_{2} &< \min \left\{ I(X_2;Y_r \vert X_{1},X_{r}, \Bt), I(X_{2},X_{r};Y_{1} \vert X_{1}, \Bt), I(X_{2},X_{r};Y_{2} \vert X_{1}, \Bt) \right\},
\label{eq2_Lemma_achievability_IRC}\\
R_{1}+R_{2} &<  \min  \left\{I(X_1,X_2;Y_r \vert X_{r}, \Bt),
I(X_{1},X_{2},X_{r};Y_{1} \vert \Bt), I(X_{1},X_{2},X_{r};Y_{2}\vert \Bt)\right\},\label{eq3_Lemma_achievability_IRC}
\end{align}
\noindent for some input distribution $p(x_1)p(x_2)p({x_r})$.

\begin{table}
\centering
\begin{tabular}{|c|c|c|c|c|}
  \hline
 Encoder  & Block $1$ & Block $2$ & Block $B$ & Block $B+1$\\
  \hline       \hline
    $1$ & $\bx{1}(1,W_{11})$ & $\bx{1}(W_{11},W_{12})$ & $\bx{1}(W_{1(B-1)},W_{1B})$ &  $\bx{1}(W_{1B},1)$ \\
  \hline       \hline
    $2$ & $\bx{2}(1,W_{21})$ & $\bx{2}(W_{21},W_{22})$ & $\bx{2}(W_{2(B-1)},W_{2B})$ & $\bx{2}(W_{2B},1)$ \\
 \hline          \hline
$r$ & $\bx{r}(1,1)$ & $\bx{r}(W_{11},W_{21})$ & $\bx{r}(W_{1(B-1)},W_{2(B-1)})$ & $\bx{r}(W_{1B},W_{2B})$ \\
 \hline
\end{tabular}
\caption{Block Markov encoding scheme for IRC.}\label{Table_UCIRC}
\end{table}

Computing the mutual informations in
\eqref{eq1_Lemma_achievability_IRC}-\eqref{eq3_Lemma_achievability_IRC} for independent Gaussians
$X_1 \sim {\mathcal{C}}{\mathcal{N}}(0,P_1)$, $X_2 \sim {\mathcal{C}}{\mathcal{N}}(0,P_2)$, $X_r
\sim {\mathcal{C}}{\mathcal{N}}(0,P_r)$, we find by \eqref{Equation_Condition_IRC1} and
\eqref{Equation_Condition_IRC5} that
\begin{align}
I(X_1;Y_r \vert X_{2},X_{r}, \Bt) & \geq I(X_{1},X_{r};Y_{1} \vert X_{2}, \Bt), \nonumber\\
I(X_{1},X_{r};Y_{2} \vert X_{2}, \Bt) & \geq I(X_{1},X_{r};Y_{1} \vert X_{2}, \Bt), \nonumber
\end{align}
\noindent respectively, and by \eqref{Equation_Condition_IRC2} and \eqref{Equation_Condition_IRC5}
that
\begin{align}
I(X_2;Y_r \vert X_{1},X_{r}, \Bt) & \geq I(X_{2},X_{r};Y_{2} \vert X_{1},\Bt), \nonumber\\
I(X_{2},X_{r};Y_{1} \vert X_{1},\Bt) & \geq I(X_{2},X_{r};Y_{2} \vert X_{1},\Bt), \nonumber
\end{align}
respectively. Also, the conditions \eqref{Equation_Condition_IRC1}-\eqref{Equation_Condition_IRC4}
together result in
\begin{align}
I(X_1,X_2;Y_r \vert X_{r}, \Bt) \geq I(X_{1},X_{2},X_{r};Y_{2} \vert \Bt), \nonumber
\end{align}
\noindent while the condition \eqref{Equation_Condition_IRC5} makes
\begin{align}
I(X_1,X_2,X_r;Y_1 \vert \Bt) \geq I(X_{1},X_{2},X_{r};Y_{2} \vert \Bt). \nonumber
\end{align}

Hence, due to \eqref{Equation_Condition_IRC1}-\eqref{Equation_Condition_IRC5}, the larger terms
will drop off from the constraints
\eqref{eq1_Lemma_achievability_IRC}-\eqref{eq3_Lemma_achievability_IRC} and we may rewrite the
sufficient conditions as
\begin{align}
R_1 & \leq \log(1+(g_{11}^2P_1+g_{r1}^2P_r)/N), \nonumber\\
R_2 & \leq \log(1+(g_{22}^2P_2+g_{r2}^2P_r)/N), \nonumber \\
R_1+R_2 & \leq \log(1+(g_{12}^2P_1+g_{22}^2P_2+g_{r2}^2P_r)/N). \nonumber
\end{align}

Thus, combining the source coding and channel coding, the achievable region is the same as the
outer bound and the proof of Theorem \ref{Theorem_Capacity_IRC} is complete.

\section{conclusion}\label{Conclusion}
The problem of sending arbitrarily correlated sources over a class of phase asynchronous
multiple-user channels with non-ergodic phase fadings is considered. Necessary and sufficient
conditions for reliable communication are presented and several source-channel separation theorems
are proved by observing the coincidence of both sets of conditions. Namely, outer bounds on the
source entropy content $(H(U\vert V),H(V\vert U),H(U,V))$ are first derived using phase uncertainty
at the encoders, and then are shown to match the achievable regions required by separate
source-channel coding under some restrictions on the channel gains. Although, our results are for
fixed $\Bt$, they are also true for the ergodic case:

\begin{remark}
In all of the above theorems, we assumed that the vector $\Bt$ is fixed over the block length. It
can be shown that the theorems also hold for the ergodic phase fading, i.e., the phase shifts
change from symbol to symbol in an i.i.d. manner, forming a matrix of phase shifts $\bm\Theta$. The
achievability parts of the theorems remain unchanged by the assumption of perfect CSI at the
receiver(s), while in the proofs of the converses, the expectation operation over $\bm\Theta$ is
used instead of taking the minimum (as in \cite{FadiAbdallah_Caire:2008}). One can then use the
results of Remark \ref{remark_key_lemma_ergodic} as a key lemma to prove the optimality of
independent Gaussians for the converse parts. \thmend
\end{remark}

As a result, joint source-channel coding is not necessary under phase incoherence for the networks
studied in this work. We also conjecture that source-channel separation is in fact optimal for all
channel coefficients and not only for the constraints presented in this paper.
%


\bibliographystyle{IEEEtranS}
\bibliography{Hamidreza_bib}

\end{document}